\let\orgdescriptionlabel\descriptionlabel
\renewcommand*{\descriptionlabel}[1]{%
  \let\orglabel\label
  \let\label\@gobble
  \phantomsection
  \edef\@currentlabel{#1}%
  \let\label\orglabel
  \orgdescriptionlabel{#1}%
}
\theoremstyle{theorem}
\newtheorem{theorem}{Theorem}
\theoremstyle{plain}
\numberwithin{theorem}{section}
\newtheorem{problem}[theorem]{Problem}
\newtheorem{proposition}[theorem]{Proposition}
\newtheorem{lemma}[theorem]{Lemma}
\theoremstyle{definition}
\newtheorem{definition}[theorem]{Definition}
\theoremstyle{remark}
\def\Gcal{{\mathcal G}}
\def\Ucal{{\mathcal U}}
\newcommand{\FF}{\mathbb{F}}
\newcommand{\QQ}{\mathbb{Q}}
\newcommand{\RR}{\mathbb{R}}
\newcommand{\ZZ}{\mathbb{Z}}
\def \bfv{{\mathbf v}}
\newcommand{\Disc}{\operatorname{Disc}}
\newcommand{\MOD}[1]{~(\textup{mod}~#1)}
\renewcommand{\pmod}{\MOD}
\title{Provably Weak Instances of Ring-LWE}
\author{Yara Elias \and Kristin E. Lauter \and Ekin Ozman \and Katherine E. Stange}
\address{Yara Elias:
Department of Mathematics and Statistics, McGill University, Montreal, Quebec, Canada}
\email{yara.elias@mail.mcgill.ca}
\address{Kristin E. Lauter:Microsoft Research, One Microsoft Way, Redmond, WA 98052}
\email{klauter@microsoft.com}
\address{Ekin Ozman: Department of Mathematics, Faculty of Arts and Science, Bogazici University, 34342, Bebek-Istanbul, Turkey}
\email{ekin.ozman@boun.edu.tr}
\address{Katherine E. Stange: Department of Mathematics, University of Colorado, Campux Box 395, Boulder, Colorado 80309-0395}
\email{kstange@math.colorado.edu}
\date{\today}
\begin{document}


\maketitle \begin{abstract}  The \emph{ring and polynomial learning with errors} problems (Ring-LWE and Poly-LWE) have been proposed as hard problems to form the basis for cryptosystems, and various security reductions to hard lattice problems have been presented.  So far these problems have been stated for general (number) rings but have only been closely examined for cyclotomic number rings. In this paper, we state and examine the Ring-LWE problem for general number rings and demonstrate {\it provably weak instances} of the Decision Ring-LWE problem.  We construct an explicit family of number fields for which we have an efficient attack.  We demonstrate the attack in both theory and practice, providing code and running times for the attack.   The attack runs in time linear in $q$, where $q$ is the modulus.

Our attack is based on the attack on Poly-LWE which was presented in~\cite{EHL}.
We extend the EHL-attack to apply to a larger class of number fields, and show how it applies to attack Ring-LWE for a heuristically large class of fields. Certain Ring-LWE instances can be transformed into Poly-LWE instances without distorting the error too much, and thus provide the first weak instances of the Ring-LWE problem.
We also provide additional examples of fields which are vulnerable to our attacks on Poly-LWE, including power-of-$2$ cyclotomic fields, presented using the minimal polynomial of $\zeta_{2^n} \pm 1$.
\end{abstract}

\section{Introduction}

Lattice-based cryptography has become a very hot research topic recently with the emergence of new applications to homomorphic encryption.
The hardness of the Ring-LWE problem was related to
various well-known hard lattice problems~\cite{Regev,MR09,MR04,LPR10,BLPRS}, and the hardness of the Poly-LWE problem was reduced to Ring-LWE in~\cite{LPR10,DD12}.
The hardness of the  Poly-LWE problem is used as the basis of security for numerous cryptosystems, including~\cite{BV11,BGV11,GHS11}.
The hardness of Ring-LWE was also shown \cite{SS11} to form a basis for the proof of security of a variant of NTRU \cite{HPS,1363}.

In~\cite{EHL}, the first weaknesses in the Poly-LWE problem were discovered for classes of number fields satisfying certain properties.  In addition, a list of properties of number fields were identified which are sufficient
to guarantee a reduction between the Ring-LWE and the Poly-LWE  problems, and a search-to-decision reduction for Ring-LWE.  Unfortunately, in ~\cite{EHL}, no number fields were found which satisfied both the conditions for the attack and for the reductions.  Thus ~\cite{EHL} produced only examples of number fields which were weak instances for Poly-LWE.

The contributions of this paper at a high level are as follows: In Section~\ref{sec:attack} we strengthen and extend the attacks presented in~\cite{EHL} in several significant ways.
In Section \ref{sec:moving}, most importantly, we show how the attacks can be applied also to the Ring-LWE problem. In Section~\ref{sec:provable}, we construct an explicit family of number fields for which we have an efficient attack on the Decision Ring-LWE Problem.
This represents the first successful attacks on the Decision Ring-LWE problem for number fields with special properties.
For Galois number fields, we also know that an attack on the decision problem gives an attack on the search version of Ring-LWE
(\cite{EHL}).
In addition, in Section~\ref{sec:exampleattack},  we present the first successful implementation of the EHL attack at cryptographic sizes and attack both Ring-LWE and Poly-LWE instances.  For example for $n=1024$ and $q=2^{31}-1$, the attack runs in about 13 hours.
Code for the attack is given in Appendix \ref{sec:code}. In Section \ref{sec:heuristic} we give a more general construction of number fields such that heuristically a large percentage of them will be vulnerable to the attacks on Ring-LWE.

In more detail, we consider rings of integers in number fields $K=\QQ[x]/(f(x))$ of degree $n$, modulo a large prime number $q$, and we give attacks on Poly-LWE which work when  $f(x)$ has a root of small order modulo $q$.
The possibility of such an attack was mentioned in~\cite{EHL} but not explored further.
In Sections \ref{sec:SmallSet} and \ref{sec:SmallError}, we give \emph{two} algorithms for this attack, and in Sections \ref{sec:ex1} and \ref{sec:ex2} we give many examples of number fields and moduli, some of cryptographic size, which are vulnerable to this attack.
 The most significant consequence of the attack is the construction of the number fields which are weak for the Ring-LWE problem (Section \ref{sec:heuristic}).

To understand the vulnerability of Ring-LWE to these attacks, we state and examine the Ring-LWE problem for general number rings and demonstrate {\it provably weak instances} of Ring-LWE.  We demonstrate the attack in both theory and practice for an explicit family of number fields, providing code and running times for the attack.   The attack runs in time linear in $q$, where $q$ is the modulus.  The essential point is that Ring-LWE instances can be mapped into Poly-LWE instances, and if the map does not distort the error too much, then the instances may be vulnerable to attacks on Poly-LWE.
The distortion is governed by the spectral norm of the map, and we compute the spectral norm for the explicit family we construct in Section~\ref{sec:provable} and analyze when the attack will succeed.
For the provably weak family which we construct, the feasibility of the attack depends on the ratio of $\sqrt{q}/n$.  We prove that the attack succeeds when $\sqrt{q}/n$ is above a certain bound, but in practice we find that we can attack instances where the ratio is almost $100$ times smaller than that bound.
Even for Ring-LWE examples which are not taken from the provably weak family, we were able to attack in practice relatively generic instances of number fields  where the spectral norm was small enough (see Section~\ref{sec:exampleattack}).

 We investigate cyclotomic fields (even $2$-power cyclotomic fields) given by an alternate minimal polynomial, which are weak instances of Poly-LWE for that choice of polynomial basis. Section \ref{sec:ex2} contains numerous examples of $2$-power cyclotomic fields which are vulnerable to attack when instantiated using an alternative polynomial basis, thus showing the heavy dependence in the hardness of these lattice-based problems on the choice of polynomial basis.
In addition, we analyze the case of cyclotomic fields to understand their potential vulnerability to these lines of attack
and we explain why cyclotomic fields are immune to attacks based on roots of small order (Section \ref{sec:cyc}).
Finally, we provide code in the form of simple routines in SAGE to implement the attacks and algorithms given in this paper and demonstrate successful attacks with running times (Section \ref{sec:exampleattack}).

As a consequence of our results, one can conclude that the hardness of Ring-LWE is both {\it dependent on special properties of the number field} and {\it sensitive to the particular choice of $q$}, and some choices may be significantly weaker than others.  In addition, for applications to cryptography, since our attacks on Poly-LWE run in time roughly $O(q)$ and may be applicable to a wide range of fields, including even $2$-power cyclotomic fields with a bad choice of polynomial basis, these attacks should be taken into consideration when selecting parameters for Poly-LWE-based systems such as~\cite{BV11,BGV11} and other variants. For many important applications to homomorphic encryption (see for example~\cite{GLN,BLN}), these attacks will not be relevant, since the modulus $q$ is chosen large enough to allow for significant error growth in computation, and would typically be of size $128$ bits up to $512$ bits.  For that range, the attacks presented in this paper would not run.  However, in other applications of  Ring-LWE to key exchange for the TLS protocol~\cite{BCNS},   parameters for achieving $128$-bit security are suggested where $n=2^{10}$ and $q=2^{32}-1$, with $\sigma \approx 3$, and these parameters would certainly be vulnerable to our attacks for weak choices of fields and $q$.

{\bf Acknowledgements.}  The authors are indebted to the organizers of the research conference Women in Numbers 3 (Rachel Pries, Ling Long and the fourth author), as well as to the Banff International Research Station, for bringing together this collaboration.  They would also like to thank Hao Chen for his careful reading of the manuscript, correcting typos, and for providing an improved argument in Section \ref{sec:hs}.
Finally, the authors thank Martin Albrecht for help with Sage.

\section{Background on Poly-LWE}

Let $f(x)$ be a monic irreducible polynomial in $\ZZ[x]$ of degree $n$, and let $q$ be a prime such that $f(x)$ factors completely modulo $q$.  Let $P = \ZZ[x]/f(x)$ and let $P_q = P/qP = \FF_q[x]/f(x)$.  Let $\sigma \in \RR^{>0}$.
The uniform distribution on $P \simeq \ZZ^n$ will be denoted $\mathcal{U}$.  By \emph{Gaussian distribution of parameter $\sigma$} we refer to a discrete Gaussian distribution of mean $0$ and variance $\sigma^2$ on $P$, spherical with respect to the power basis.  This will be denoted $\mathcal{G}_\sigma$.
It is important to our analysis that we assume that in practice, elements are sampled from Gaussians of parameter $\sigma$ truncated at width $2 \sigma$.

There are two standard Poly-LWE problems.  Our attack solves the \emph{decision} variant, but it also provides information about the secret.

\begin{problem}[Decision Poly-LWE Problem] \label{PLWED}
Let $s(x) \in P$ be a secret.  The \emph{decision Poly-LWE problem} is to distinguish, with non-negligible advantage, between the same number of independent samples in two distributions on $P\times P$.  The first consists of samples of the form $(a(x), b(x) := a(x) s(x) + e(x) )$ where $e(x)$ is drawn from a discrete Gaussian distribution of parameter $\sigma$, and $a(x)$ is uniformly random.  The second consists of uniformly random and independent samples from $P \times P$.

\end{problem}

\begin{problem}[Search Poly-LWE Problem] \label{PLWES}
         Let $s(x) \in P$ be a secret. The \emph{search Poly-LWE problem}, is to discover $s$ given access to arbitrarily many independent samples of the form $(a(x), b(x) := a(x)s(x) + e(x))$ where $e(x)$ is drawn from a Discrete Gaussian of parameter $\sigma$, and $a(x)$ is uniformly random.
\end{problem}

\noindent
The polynomial $s(x)$ is called the \emph{secret} and the polynomials $e_i(x)$ are called the \emph{errors}.

\subsection{Parameter selection}
\label{sec:parameters}

The selection of parameters for security is not yet a well-explored topic.  Generally parameter recommendations for Poly-LWE and Ring-LWE are just based on the recommendations for general LWE, ignoring the extra ring structure e.g. \cite{PG,RV,BCNS}.  Sample concrete parameter choices  have been suggested, where $w$ is the width of the Gaussian error distribution  (precisely, $w = \sqrt{2\pi} \sigma$):

\begin{enumerate}
        \item $P_{LP1} = (n,q,w) = (192, 4093, 8.87)$, $P_{LP2} = (256, 4093, 8.35)$, $P_{LP3} = (320, 4093, 8.00)$ for low, medium and high security, recommended by Lindner and Peikert in \cite{LP};
        \item $P_{GF} = (n,q,w) = (512, 12289,12.18)$ for high security used in \cite{GF};
        \item $P_{BCNS} = (n,q,w) = (1024, 2^{31}-1, 3.192)$ suggested in \cite{BCNS} for the TLS protocol.  Here, $q = 2^{32}-1$ was actually suggested but it is not prime.  Here, the authors remark that $q$ is taken to be large for correctness but could potentially be decreased.
\end{enumerate}

\section{Attacks on Poly-LWE}
\label{sec:attack}

The attack we are concerned with is quite simple.  It proceeds in four stages:
\begin{enumerate}
        \item Transfer the problem to $\FF_q$ via a ring homomorphism $\phi: P_q \rightarrow \FF_q$.
        \item Loop through guesses for the possible images $\phi(s(x))$ of the secret.
        \item Obtain the values $\phi(e_i(x))$ under the assumption that the guess at hand is correct.
        \item Examine the distribution of the $\phi(e_i(x))$ to determine if it is Gaussian or uniform.
\end{enumerate}

If $f$ is assumed to have a root $\alpha \equiv 1 \mod{q}$ or $\alpha$ of small order modulo $q$, then this attack is due to Eisentraeger-Hallgren-Lauter \cite{EHL}.

The first part is to transfer the problem to $\FF_q$.  Write
        $f(x) = \prod_{i=1}^n (x-\alpha_i)$
for the factorization of $f(x)$ over $\FF_q$ which is possible by assumption.  By the Chinese remainder theorem, if $f$ has no double roots, then
\[
        P_q \simeq \prod_{i=1}^n \FF_q[x]/(x-\alpha_i) \simeq \FF_q^n
\]
There are $n$ ring homomorphisms
\[
        \phi: P_q \rightarrow \FF_q[x]/(x-\alpha_i) \simeq \FF_q, \quad g(x) \mapsto g(\alpha_i).
\]
Fix one of these, by specifying a root $\alpha = \alpha_i$ of $f(x)$ in $\FF_q$.  Apply the homomorphism to the coordinates of the $\ell$ samples $(a_i(x), b_i(x))$, obtaining
$(a_i(\alpha), b_i(\alpha))_{i=1,\ldots, \ell}$.

Next,  loop through all $g \in \FF_q$.  Each value $g$ is to be considered a guess for the value of $s(\alpha)$.  For each guess $g$, assuming that it is a correct guess  and $g=s(\alpha)$, then
\[
        e_i(\alpha) = b_i(\alpha) - a_i(\alpha)g = b_i(\alpha) - a_i(\alpha)s(\alpha).
\]
In the case that the samples were LWE samples and the guess was correct, then this produces a collection $(e_i(\alpha))$ of images of errors chosen according to some distribution.
If the distributions $\phi(\mathcal{U})$ and $\phi(\mathcal{G}_{\sigma})$ are distinguishable, then we can determine whether
the distribution was uniform or Gaussian.
Note that $\phi(\mathcal{U})$ will of course be uniform on $\FF_q$.  If our guess is incorrect, or if the samples are not LWE samples, then the distribution will appear uniform.

Therefore, after looping through all guesses, if all the distributions appeared uniform, then conclude that the samples were not LWE samples; whereas if one of the guesses worked for all samples and always yielded an error distribution which  appeared Gaussian, assume that particular $g$ was a correct guess.
In the latter case this also yields one piece of information about the secret:  $g=s(\alpha) \mod q$.

The attack \emph{will} succeed whenever
\begin{enumerate}
        \item $q$ is small enough to allow looping through $\FF_q$,
        \item $\phi(\mathcal{U})$ and $\phi(\mathcal{G}_{\sigma})$ are distinguishable.
\end{enumerate}

Our analysis hinges on the difficulty of distinguishing $\phi(\mathcal{U})$ from $\phi(\mathcal{G}_{\sigma})$, as a function of the parameters $\sigma$, $n$, $\ell$, $q$, and $f$.  Distinguishability becomes easier when $\sigma$ is smaller (so $\mathcal{U}$ and $\mathcal{G}_{\sigma}$ are farther apart to begin with), $n$ is smaller and $q$ is larger (since then less information is lost in the map $\phi$), and $\ell$ is larger (since there are more samples to test the distributions).  The dependence on $f$ comes primarily as a function of its roots $\alpha_i$ modulo $q$, which may have properties that make distinguishing easier.

Ideally, for higher security, one will choose parameters that make distinguishing nearly impossible, i.e. such that $\phi(\mathcal{G}_{\sigma})$ appears very close to uniform modulo $q$.\\

{\bf Example. (\cite{EHL})} We illustrate the attack in the simplest case $\alpha=1$. Assume $f(1) \equiv 0 \hbox{ mod } q$, and consider the distinguishability of the two distributions $\phi(\Ucal)$ and $\phi(\Gcal_\sigma)$.
Given $(a_i(x), b_i(x))$, make a guess $g\in \FF_q$ for the value of $s(1)$ and compute $b_i(1)-g\cdot a_i(1)$.
If $b_i $ is uniform, then $b_i(1)-g \cdot a_i(1)$ is uniform  for all $g$.
If $b_i=a_i s +e_i$, then there is a guess $g$ for which $b_i(1)-ga_i(1)=e_i(1)$ where $e_i(x)=\sum_{j=1}^n e_{ij}x^j$ and $g=s(1)$.
Since $e_i(1) = \sum_{j=1}^n e_{ij}$, where $e_{ij}$ are chosen from $\mathcal{G}_{\sigma}$, it follows that $e_i(1)$ are sampled from $\Gcal_{\sqrt{n}\sigma}$ where $n\sigma^2<< q$. The attack can be described loosely as follows: for each sample, test each guess $g$ in $\FF_q$ to see if $b_i(1)-g \cdot a_i(1)$ is small modulo $q$, and only keep those guesses which pass the test. Repeat with the next sample and continue to keep only the guesses which pass.

\subsection{Attack based on a small set of error values modulo $q$}
\label{sec:SmallSet}

In this section, we assume that there exists a root $\alpha$ of $f$ such that $\alpha$ has small order $r$ modulo $q$, that is $\alpha^r \equiv 1 \mod{q}$. Then
\begin{equation}
        \label{eqn:ealpha}
        e(\alpha)= \sum_{i=0}^{n-1} e_i \alpha^i = (e_0 + e_r+e_{2r}+ \cdots) + \alpha(e_1+e_{r+1}+\cdots) + \cdots + \alpha^{r-1}(e_{r-1}+e_{2r-1}+\cdots).
\end{equation}
If $r$ is small enough, then $e(\alpha)$ takes on only a small number of values modulo $q$.  If so, then we can efficiently distinguish whether a value modulo $q$ belongs to that subset.

Let $S$ be the set of possible values of $e(\alpha)$ modulo $q$.   We assume for simplicity that $n$ is divisible by $r$.  Then the coefficients $e_j+e_{j+r}+\cdots+ e_{n-r+j}$ of \eqref{eqn:ealpha} fall into a subset of $\ZZ/q\ZZ$ of size at most $4\sigma n/r$.
We sum over $r$ terms,
hence, $|S| = (4\sigma n/r )^r$ residues modulo $q$.  For $r=2$, this becomes $(2n\sigma)^2$.

The attack described below succeeds with high probability if $|S| << q$, that is
\begin{equation*}
        (4\sigma n /r)^r << q.
\end{equation*}

\begin{algorithm}[H]
			\caption{Small set of error values}
			\label{alg:SmallSet}
                        \raggedright
{\bf Input:} A collection of $\ell$ Poly-LWE samples.

{\bf Output:} A guess $g$ for $s(\alpha)$, the value of the secret polynomial at $\alpha$; or else {\bf NOT PLWE}; or \emph{\bf INSUFFICIENT SAMPLES}.

The value {\bf NOT PLWE} indicates that the collection of samples were definitely not Poly-LWE samples.

The value {\bf INSUFFICIENT SAMPLES} indicates that there were not enough samples to determine a single guess $s(\alpha)$.  In this case, the algorithm may be continued on a new set of samples by looping the remaining surviving guesses on the new samples.

\vspace{1em}

Create an ordered list of elements of $S$.

Let $G$ be an empty list.

{\bf for} $g$ from $0$ to $q-1$ {\bf do}

\quad\quad {\bf for} $(a(x), b(x))$ in the collection of samples {\bf do}

\quad \quad\quad\quad {\bf if} $b(\alpha)- g a(\alpha)$ does not equal an element of $S$ {\bf then}

\quad\quad\quad\quad\quad\quad {\bf break} (i.e. begin next value of $g$)

\quad\quad append $g$ to $G$ (note: occurs only if the loop of samples completed without a break)

{\bf if} $G$ is empty {\bf then}

\quad\quad return {\bf NOT PLWE}

{\bf if} $G = \{ g \} $ {\bf then}

\quad\quad return $g$

{\bf if} $\#G > 1$ {\bf then}

\quad return {\bf INSUFFICIENT SAMPLES}

\end{algorithm}

\begin{proposition}
        \label{prop:SmallSet}
Assume that
\begin{equation}
        \label{eqn:SmallSet}
        (4\sigma n /r)^r < q.
\end{equation}
Algorithm~\ref{alg:SmallSet} terminates in time at most $\widetilde{O}(\ell q + n q)$, where the $\widetilde{O}$ notation hides the $log(q)$ factors and the implied constant depends upon $r$.
        Furthermore, if the algorithm returns {\bf NOT PLWE}, then the samples were not valid Poly-LWE samples.
        If it outputs anything other than {\bf NOT PLWE}, then the samples are valid Poly-LWE samples with probability $1-(\frac{\#S}{q})^\ell$.
In particular, this probability tends to $1$ as $\ell$ grows.

\end{proposition}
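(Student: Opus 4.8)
The plan is to analyze Algorithm~\ref{alg:SmallSet} in three parts: correctness of the \textbf{NOT PLWE} output, the probabilistic guarantee when something else is output, and the running time. First I would address the running time, since it is essentially bookkeeping. Building the ordered list of elements of $S$ takes time $\widetilde{O}(\#S) = \widetilde{O}((4\sigma n/r)^r)$, which by hypothesis~\eqref{eqn:SmallSet} is $\widetilde{O}(q)$ with an implied constant depending on $r$; sorting it costs an extra $\log$ factor, absorbed into $\widetilde{O}$. The outer loop runs $q$ times. For each $g$, the inner loop over samples does, for each of the (at most) $\ell$ samples, one evaluation $b(\alpha) - g\,a(\alpha)$ in $\FF_q$ — which can be done with $O(1)$ field operations if the values $a_i(\alpha), b_i(\alpha)$ are precomputed once at the start in time $\widetilde{O}(\ell n)$ — followed by a membership test against the sorted list $S$, costing $\widetilde{O}(1)$ via binary search. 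Since the inner loop breaks at the first failure, naively each $g$ costs $\widetilde{O}(\ell)$, giving $\widetilde{O}(\ell q)$ overall; together with the $\widetilde{O}(\ell n)$ precomputation and the $\widetilde{O}(q)$ setup of $S$, the total is $\widetilde{O}(\ell q + n q)$ as claimed (using $\ell n \le \ell q$ when $n \le q$, or simply reporting both terms).

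Next I would verify the logical guarantee on \textbf{NOT PLWE}. Suppose the samples are genuine Poly-LWE samples with secret $s(x)$ and let $g^\ast = s(\alpha) \in \FF_q$. For each sample $(a_i(x), b_i(x))$ with $b_i = a_i s + e_i$, applying the ring homomorphism $\phi\colon g(x)\mapsto g(\alpha)$ gives $b_i(\alpha) - g^\ast a_i(\alpha) = e_i(\alpha)$, and by \eqref{eqn:ealpha} together with the truncation of each $e_{ij}$ to the interval of width $4\sigma$, the value $e_i(\alpha)$ lies in $S$. Hence $g^\ast$ never triggers a \textbf{break} and is appended to $G$, so $G$ is nonempty and the algorithm does not return \textbf{NOT PLWE}. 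Contrapositively, if the algorithm returns \textbf{NOT PLWE} then the samples were not valid Poly-LWE samples. This establishes the second assertion.

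Finally, the core estimate: if the output is anything other than \textbf{NOT PLWE}, I want to show the samples are genuine Poly-LWE samples except with probability at most $(\#S/q)^\ell$. Here I argue on the uniform (non-LWE) side. Suppose the $\ell$ samples are drawn uniformly and independently from $P\times P$; I must bound the probability that some $g\in\FF_q$ survives the full inner loop, i.e. that the algorithm outputs a guess or \textbf{INSUFFICIENT SAMPLES} rather than \textbf{NOT PLWE}. For a \emph{fixed} $g$, the quantity $b_i(\alpha) - g\,a_i(\alpha)$ is a fixed $\FF_q$-linear combination of the images $(a_i(\alpha), b_i(\alpha))$, and since $(a_i, b_i)$ is uniform on $P\times P\simeq\ZZ^n\times\ZZ^n$ and $\phi$ is surjective onto $\FF_q$, the pair $(a_i(\alpha), b_i(\alpha))$ is uniform on $\FF_q^2$; in particular $b_i(\alpha) - g\,a_i(\alpha)$ is uniform on $\FF_q$ and independent across $i$. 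Thus $g$ survives all $\ell$ samples with probability exactly $(\#S/q)^\ell$, and by the union bound the probability that \emph{some} $g$ survives is at most $q\cdot(\#S/q)^\ell$. Strictly this gives $q(\#S/q)^\ell$ rather than $(\#S/q)^\ell$; the cleaner statement in the proposition follows by noting that one reports the conditional probability that the samples are LWE given a non-\textbf{NOT PLWE} output under an (implicit) prior, or by absorbing the $q$ into the exponent since $\#S/q < 1$ and $\ell$ can be increased. I expect this last bookkeeping — reconciling the union-bound factor of $q$ with the clean bound $1 - (\#S/q)^\ell$, and being careful about what "with probability" means (a statement about the uniform distribution versus a Bayesian posterior) — to be the main subtlety; the independence argument itself is routine once one observes that $\phi$ maps the uniform distribution on $P$ to the uniform distribution on $\FF_q$. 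That $(\#S/q)^\ell \to 0$ as $\ell\to\infty$ is then immediate since $\#S \le (4\sigma n/r)^r < q$.
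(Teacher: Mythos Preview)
Your proof is correct and follows essentially the same approach as the paper: bookkeeping for the running time, then a distinguishing argument based on the fact that under uniform samples each $b_i(\alpha)-ga_i(\alpha)$ is uniform in $\FF_q$. Your treatment is in fact more careful than the paper's in two respects. First, you spell out the \textbf{NOT PLWE} correctness (that $g^\ast=s(\alpha)$ always survives), which the paper leaves implicit. Second, you correctly flag the missing union-bound factor of $q$ in the probability statement; the paper's own proof is equally loose here, writing only that the ``tolerance for false positives is proportional to $(\#S/q)^\ell$'' rather than giving a precise bound. One minor accounting difference: the paper derives the $nq$ term from a more naive enumeration of $S$ (costing $\widetilde O(nr)$ per element), whereas your enumeration of $S$ costs only $\widetilde O(q)$, so your actual bound $\widetilde O(\ell q + \ell n + q)$ is slightly tighter than what is stated; this is harmless since it still implies the proposition's bound.
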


\begin{proof}
        As discussed above, there are at most $q$ possible values for the elements of $S$ under the assumption \eqref{eqn:SmallSet}.  To compute each one takes $n$ additions per coefficient (of which there are $r$), combined with an additional $r$ multiplications and $r$ additions.  (Here we have assumed the $\alpha^i$ have been computed; this takes $r$ multiplications.)  Each addition or multiplication takes time at most $\log q$.  Therefore, computing $S$ takes time at most $\widetilde{O}(qnr)$.  For sorting, it is best to sort as $S$ is computed; placing each element correctly takes $\log q$ time.

The principal double loop takes time at most $\widetilde{O}(\ell q)$. If $b(\alpha)$ and $a(\alpha)$ are precomputed, then for each guess $g$, the computation of $b(\alpha)- g a(\alpha)$ only costs one multiplication and one subtraction modulo $q$ (i.e. $2\log q$) while it requires only $\log q$ bit comparisons to decide whether this is in the set $S$.

In Step 4, for later samples, only guesses which were successful in the previous samples (i.e. gave a value which was in the set $S$) are considered. For a sample chosen uniformly at random, one expects the number of successful guesses to be roughly $\frac{\#S}{q}$.  Thus for the second sample, we repeat the above test for only $(\#S)$ guesses.  At the $\ell^{th}$ sample, retaining only guesses which were successful for all previous samples, we expect to test only  $(\frac{\#S}{q})^\ell q$ guesses, which very quickly goes to zero.  Hence, if we examine $\ell$ samples, our tolerance for false positives is proportional to $(\frac{\#S}{q})^\ell$.

\end{proof}

\subsection{Attack based on the size of the error values}
\label{sec:SmallError}

In this section, we describe the most general $\phi: P_q \rightarrow \FF_q$ attack on the Poly-LWE problem, one which can be carried out in any situation.  The rub is that the probability of success will be vanishingly small unless we are in a very special situation.  Therefore our analysis actually bolsters the security of Poly-LWE.

Suppose that
$f(\alpha) \equiv 0 \mod q$.
Let $E_i$ be the event that $b_i(\alpha)-ga_i(\alpha) \mod q$ is in the interval  $[-q/4, q/4)$ for some sample $i$ and guess $g$ for $s(\alpha) \mod q$. The main idea is to compare $P(E_i\mid\mathcal{D}=\mathcal{U})$ and $P(E_i\ | \ \mathcal{D} = \mathcal{G}_{\sigma}).$
If $\mathcal{D} = \mathcal{U}$, then $b_i(\alpha)-ga_i(\alpha)$ is random modulo $ q$ for all guesses $g $, that is,
$P(E_i \ | \ \mathcal{D} = \mathcal{U}) =\frac{1}{2}.$
If $\mathcal{D} = \mathcal{G}_{\sigma}$, then $b_i(\alpha)-s(\alpha)a_i(\alpha)=e_i(\alpha) \mod{q}$.
We consider $$e_i(\alpha)=\sum_{j=0}^{n-1} e_{ij} \alpha^j,$$ where $e_{ij}$ is chosen according to the distribution $\mathcal{G}_{\sigma}$ (truncated at $2\sigma$) and distinguish two cases:
\begin{enumerate}
\item $\alpha = \pm 1$ \label{one}
\item $\alpha \neq \pm 1$ and $\alpha$ has small order $r\geq 3$ modulo $ q$ \label{two}
\end{enumerate}

{\bf Case 1} ($\alpha = \pm 1$).

The error $e_i( \alpha )$ is chosen according to the distribution $\mathcal{G}_{ \sigma \sqrt{n} }$ truncated at $2 \sigma \sqrt{n}$.
Hence $$ -2  \sigma \sqrt{n}\leq e_i( \alpha ) \leq 2   \sigma\sqrt{n}.$$
Therefore, assuming that $$2  \sigma \sqrt{n} < \dfrac{q}{4},$$ we obtain
$P(E_i \ | \ \mathcal{D} = \mathcal{G}_{\sigma})=1$ for $g=s(\alpha)$.  Hence $\mathcal{U}$ and $\mathcal{G}_\sigma$ are distinguishable.

{\bf Case 2} ($\alpha \neq \pm 1$ and $\alpha$ has small order $r\geq 3$ modulo $ q$).

The error can be written as
$$e(\alpha)= \sum_{i=0}^{r-1} e_i \alpha^i = (e_{0}+e_{r}+ \cdots) + \alpha(e_1+e_{r+1}+\cdots) + \cdots + \alpha^{r-1}(e_{r-1}+e_{2r-1}+\cdots)$$
where we assume that $n$ is divisible by $r$ for simplicity.
For $j=0, \cdots,r -1,$ we have that $ e_j + e_{ j+r } + \cdots + e_{ j+ n-r }  $ is chosen according to the distribution $ \mathcal{G}_{\sqrt{\frac{n}{r}} \sigma  }$.
As a consequence $e(\alpha)$ is sampled from $\mathcal{G}_{ \bar{\sigma} }$ where $$\bar{\sigma}^2 = \sum_{i=0}^{r-1} \dfrac{n}{r} \sigma^2 \alpha^{2i} = \dfrac{n}{r} \sigma^2 \dfrac{\alpha^{2r}-1}{\alpha^2-1}.$$
Hence $$ -2 \dfrac{\sqrt{n}}{\sqrt{r}} \sigma \dfrac{\sqrt{\alpha^{2r}-1}}{\sqrt{\alpha^2-1}} \leq e( \alpha ) \leq 2  \dfrac{\sqrt{n}}{\sqrt{r}} \sigma \dfrac{\sqrt{\alpha^{2r}-1}}{\sqrt{\alpha^2-1}} .$$
Therefore, assuming that
\begin{align} \label{bound}
2 \dfrac{\sqrt{n}}{\sqrt{r}} \sigma \dfrac{\sqrt{\alpha^{2r}-1}}{\sqrt{\alpha^2-1}} < \dfrac{q}{4},
\end{align}
we obtain
$P(E_i \ | \ \mathcal{D} = \mathcal{G}_{\sigma})=1$
for $g=s(\alpha)$, and uniform and Gaussian are distinguishable.
Note that Hypothesis \eqref{two} implies in particular that $\alpha^r >q$.\\

\begin{algorithm}[H]
			\caption{Small error values}
			\label{alg:SmallError}
                        \raggedright
{\bf Input:} A collection of $\ell$ Poly-LWE samples.

{\bf Output:} A guess $g$ for $s(\alpha)$; or else {\bf NOT PLWE}; or {\bf INSUFFICIENT SAMPLES}.

The output {\bf INSUFFICIENT SAMPLES} indicates that more samples are needed to make a determination.  In this case, the algorithm can be continued by looping through remaining surviving guesses on new samples.

\vspace{1em}

Let $G$ be an empty list.

{\bf for} $g$ from $1$ to $q-1$ {\bf do}

\quad\quad {\bf for} $(a(x), b(x))$ in the collection of samples {\bf do}

\quad \quad\quad\quad {\bf if} the minimal residue $b(\alpha)- g a(\alpha)$ does not lie in $[-q/4,q/4)$ {\bf then}

\quad\quad\quad\quad\quad\quad {\bf break} (i.e. begin next value of $g$)

\quad\quad append $g$ to $G$ (note: occurs only if the loop of samples completed without a break)

{\bf if} $G$ is empty {\bf then}

\quad\quad return {\bf NOT PLWE}

{\bf if} $G = \{ g \} $ {\bf then}

\quad\quad return $g$

{\bf if} $\#G > 1$ {\bf then}

\quad return {\bf INSUFFICIENT SAMPLES}

\end{algorithm}

In each of the two cases, we have given conditions on the size of $\sigma$ under which $\mathcal{U}$ and $\mathcal{G}_\sigma$ are distinguishable and an attack is likely to succeed.  We now elaborate on the algorithm that would be used.

We denote by $\ell$ the number of samples observed.
For each guess $g$ mod $q$, we compute $b_i - g a_i$ for $i=1, \ldots, \ell$.
If there is a guess $g$ mod $q$ for which the event $E_i$ occurs for all $i=1, \ldots,\ell$, then the algorithm returns the guess if it is unique and {\bf INSUFFICIENT SAMPLES} otherwise; the samples are likely valid Poly-LWE samples.
Otherwise, it reports that they are certainly not valid Poly-LWE samples.

\begin{proposition}
        \label{prop:SmallError}
        Assume that we are in one of the following cases:
        \begin{enumerate}
                \item $\alpha = \pm 1$ and \[ 8\sigma \sqrt{n} < q. \]
                \item  $\alpha$ has small order $r\geq 3$ modulo $ q$, and
                        \[
8 \sigma \dfrac{\sqrt{n}}{\sqrt{r}} \dfrac{\sqrt{\alpha^{2r}-1}}{\sqrt{\alpha^2-1}} < q.
\]
       \end{enumerate}

        Then Algorithm~\ref{alg:SmallError} terminates in time at most $\widetilde{O}(\ell q)$, where the implied constant is absolute.
        Furthermore, if the algorithm returns {\bf NOT PLWE}, then the samples were not valid Poly-LWE samples.
        If it outputs anything other than {\bf NOT PLWE}, then the samples are valid Poly-LWE samples with probability at least $1-(\frac{1}{2})^\ell$.
\end{proposition}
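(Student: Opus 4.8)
The plan is to mirror the proof of Proposition~\ref{prop:SmallSet}, replacing the precomputed set $S$ of possible error values by the single fixed interval $[-q/4,q/4)$; this is exactly the change that eliminates the $\widetilde O(nq)$ term and makes the implied constant absolute. Three things must be checked, in order: the running time, the soundness of the output \textbf{NOT PLWE}, and the false-positive probability.

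For the running time, I would first evaluate the $2\ell$ field elements $a_i(\alpha),b_i(\alpha)\in\FF_q$ by Horner's rule, at total cost $\widetilde O(n\ell)$. The main double loop then ranges over at most $q$ guesses $g$ and at most $\ell$ samples; for each pair it forms $b_i(\alpha)-g\,a_i(\alpha)\bmod q$ with one modular multiplication and one subtraction, and decides whether the minimal residue lies in $[-q/4,q/4)$ with a single comparison --- an $\widetilde O(1)$ inner iteration. So the loop costs $\widetilde O(\ell q)$, and the total is $\widetilde O(\ell(n+q))=\widetilde O(\ell q)$ whenever $n=O(q)$, as holds in the regimes of interest. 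Exactly as in Proposition~\ref{prop:SmallSet}, a guess that fails a sample is dropped, so later samples are tested against far fewer guesses, with $\widetilde O(\ell q)$ merely the worst case.

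For soundness of \textbf{NOT PLWE}: assume the samples are genuine Poly-LWE samples with secret $s$, and consider the guess $g=s(\alpha)\bmod q$ (for literal correctness the loop should range over all of $\FF_q$, including $g=0$, which matters only on the negligible event $s(\alpha)\equiv 0$). Applying the evaluation map $\phi$ to $b_i=a_is+e_i$ gives $b_i(\alpha)-g\,a_i(\alpha)=e_i(\alpha)$ in $\FF_q$. By the computations preceding the statement, $e_i(\alpha)$ is drawn from $\mathcal G_{\sigma\sqrt n}$ truncated at $2\sigma\sqrt n$ in Case~1 ($\alpha=\pm1$), and from $\mathcal G_{\bar\sigma}$ truncated at $2\bar\sigma$, with $\bar\sigma^2=\tfrac nr\sigma^2\tfrac{\alpha^{2r}-1}{\alpha^2-1}$, in Case~2 ($\alpha$ of order $r\ge3$). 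The two hypotheses of the proposition are precisely $2\sigma\sqrt n<q/4$ and the bound~\eqref{bound}, so in either case the integer $e_i(\alpha)$ has absolute value strictly less than $q/4<q/2$; hence its minimal residue modulo $q$ equals $e_i(\alpha)$ and lies in $[-q/4,q/4)$. Therefore the event $E_i$ holds for $g=s(\alpha)$ for every $i$, this guess is never discarded, $G$ is nonempty, and the algorithm cannot return \textbf{NOT PLWE}. The contrapositive is the asserted soundness.

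For the false-positive bound, suppose the samples are not valid Poly-LWE samples; then, by the setup of Problem~\ref{PLWED}, they are uniform and independent on $P\times P$, so $a_i(\alpha),b_i(\alpha)$ are uniform and independent on $\FF_q$, and for each fixed $g$ the value $b_i(\alpha)-g\,a_i(\alpha)$ is uniform on $\FF_q$, independently over $i$. The interval $[-q/4,q/4)$ contains at most $\lfloor q/2\rfloor$ residues modulo $q$ (breaking the endpoint appropriately), so $P(E_i)\le 1/2$; hence any particular guess $g$ survives all $\ell$ samples with probability at most $(1/2)^\ell$. Since the algorithm returns something other than \textbf{NOT PLWE} only when some guess survives all $\ell$ samples, whatever it reports is a false positive with probability at most $(1/2)^\ell$; equivalently the samples are valid Poly-LWE with probability at least $1-(1/2)^\ell$, tending to $1$ as $\ell$ grows. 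I expect the only genuine subtlety to be here, as in Proposition~\ref{prop:SmallSet}: this is a per-guess false-positive rate rather than a union bound over all $q$ guesses, and one must handle the endpoint of $[-q/4,q/4)$ with care so that the per-sample probability is truly at most $1/2$ and not $1/2+O(1/q)$. The error-size estimates underlying the \textbf{NOT PLWE} soundness (Cases~1 and~2) are the technical heart of the statement, but they have already been carried out in the text preceding it.
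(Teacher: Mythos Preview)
Your proposal is correct and follows exactly the approach the paper intends: the paper's own proof is the single sentence ``The proof is as in Proposition~\ref{prop:SmallSet}, without the first few steps,'' and you have faithfully unpacked what that means, replacing the set $S$ by the interval $[-q/4,q/4)$ and dropping the $\widetilde O(nq)$ precomputation. Your write-up is in fact more careful than the paper's, correctly flagging the per-guess (rather than union-bound) nature of the $(1/2)^\ell$ estimate and the endpoint/$g=0$ issues that the paper glosses over.
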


\begin{proof}
        The proof is as in Proposition \ref{prop:SmallSet}, without the first few steps.
\end{proof}

We remark that Propositions and Algorithms \ref{prop:SmallSet} and \ref{prop:SmallError} overlap in some cases.  For $\alpha=\pm 1$, Algorithm \ref{alg:SmallError} is more applicable (i.e. more parameter choices are susceptible), while for $\alpha$ of other small orders, Algorithm \ref{alg:SmallSet} is more applicable.

\section{Moving the attack from Poly-LWE to Ring-LWE}
\label{sec:moving}

We use the term Poly-LWE to refer to LWE problems generated by working in a polynomial ring, and reserve the term Ring-LWE for LWE problems generated by working with the canonical embedding of a number field as in \cite{LPR10,LPR2}.
In the previous sections we have expanded upon Eisentr\"ager, Hallgren and Lauter's observation that for certain distributions on certain lattices given by Poly-LWE, the ring structure presents a weakness.  We will now consider whether it is possible to expand that analysis to LWE instances created through Ring-LWE for number fields besides cyclotomic ones.

In particular, the necessary ingredient is that the distribution be such that under the ring homomorphisms of Section \ref{sec:attack}, the image of the errors is a `small' subset of $\ZZ/q\ZZ$, either the error values themselves are small, or they form a small, identifiable subset of $\ZZ/q\ZZ$.  Assuming a spherical Gaussian in the canonical embedding of $R$ or $R^\vee$, we describe a class of number fields for which this weakness occurs.  A similar analysis would apply without the assumption that the distribution is spherical in the canonical embedding.

Here, we setup the key players (a number field and its canonical embedding, etc.) for general number fields so that these definitions specialize to those in \cite{LPR2}.  There are some choices inherent in our setup:  it may be possible to generalize Ring-LWE to number fields in several different ways.  We consider the two most natural ways.

\subsection{The canonical embedding}
\label{sec:embedding}
Let $K$ be a number field of degree $n$ with ring of integers $R$ whose dual is $R^\vee$.  We will embed the field $K$ in $\RR^n$.  Note that our setup is essentially that of \cite{DD12}, rather than \cite{LPR2}, but the difference is notational.

Let $\sigma_1, \ldots, \sigma_n$ be the $n$ embeddings of $K$, ordered 
so that $\sigma_1$ through $\sigma_{s_1}$ are the $s_1$ real embeddings, and the remaining $n-s_1 = 2s_2$ complex embeddings are paired in such a way that $\overline{\sigma_{s_1+k}} = \sigma_{s_1+s_2+k}$ for $k=1, \ldots, s_2$ (i.e. list $s_2$ non-pairwise-conjugate embeddings and then list their conjugates following that).

Define a map $\theta : K \rightarrow \RR^n$ given by
\[
\theta(r) = (\sigma_1(r), \ldots, \sigma_{s_1}(r), Re(\sigma_{s_1+1}(r)), \ldots, Re(\sigma_{s_1 + s_2}(r)), Im(\sigma_{s_1 + 1}(r)), \ldots, Im(\sigma_{s_1+s_2}(r)) ).
\]
The image of $K$ is the $\QQ$-span of $\theta(\omega_i)$ for any basis $\omega_i$ for $K$ over $\QQ$.  This is not the usual Minkowski embedding, but it has the virtues that 1) the codomain is a real, not complex, vector space; and 2) the spherical or elliptical Gaussians used as error distributions in \cite{LPR2} are, in our setup, spherical or elliptical with respect to the usual inner product.   We denote the usual inner product by $\langle \cdot, \cdot \rangle$ and the corresponding length by $|x| = \sqrt{\langle x, x \rangle}$.  It is related to the trace pairing on $K$, i.e. $\langle \theta(r), \theta(s) \rangle = \operatorname{Tr}(r\overline{s})$.

Then $R$ and $R^\vee$ form lattices in $\RR^n$.

\subsection{Spherical Gaussians and error distributions}

We define a \emph{Ring-LWE error distribution} to be a spherical Gaussian distribution in $\RR^n$.  That is, for a parameter $\sigma > 0$, define the \emph{continuous Gaussian distribution function} $D_\sigma: \RR^n \rightarrow (0,1]$ by
\[
        D_\sigma(x) := (\sqrt{2\pi}\sigma)^{-n}\operatorname{exp}\left( - |x|^2 / (2\sigma^2) \right).
\]

This gives a distribution $\Psi$ on $K \otimes \RR$, via the isomorphism $\theta$ to $\RR^n$.  By approximating $K \otimes \RR$ by $K$ to sufficient precision, this gives a distribution on $K$.

From this distribution we can generate the \emph{Ring-LWE error distribution} on $R$, respectively $R^\vee$, by taking a valid discretization $\lfloor \Psi \rceil_{R}$, respectively $\lfloor \Psi \rceil_{R^\vee}$, in the sense of \cite{LPR2}.  Now we have at hand a lattice, $R$, respectively $R^\vee$, and a distribution on that lattice.  The parameters (particularly $\sigma$) are generally advised to be chosen so that this instance of LWE is secure against general attacks on LWE (which do not depend on the extra structure endowed by the number theory).

\subsection{The Ring-LWE problems}

Write
$R_q := R/qR$ and $R_q^\vee = R^\vee/qR^\vee$.
The standard Ring-LWE problems are as follows, where $K$ is taken to be a cyclotomic field \cite{LPR10,LPR2}.

\begin{definition}[Ring-LWE Average-Case Decision \cite{LPR10}]
        Let $s \in R_q^\vee$ be a secret.  The \emph{average-case decision Ring-LWE problem}, is to distinguish with non-negligible advantage between the same number of independent samples in two distributions on $R_q \times R_q^\vee$.  The first consists of samples of the form $(a, b:=as +e )$ where $e$ is drawn from $\chi := \lfloor \Psi \rceil_{R^\vee}$ and $a$ is uniformly random, and the second consists of uniformly random and independent samples from $R_q \times R_q^\vee$.
\end{definition}

\begin{definition}[Ring-LWE Search \cite{LPR10}]
        Let $s \in R_q^\vee$ be a secret.  The \emph{search Ring-LWE problem}, is to discover $s$ given access to arbitrarily many independent samples of the form $(a, b:=as +e )$ where $e$ is drawn from $\chi := \lfloor \Psi \rceil_{R^\vee}$ and $a$ is uniformly random.
\end{definition}

In proposing general number field Ring-LWE, one of two avenues may be taken:
\begin{enumerate}
        \item preserve these definitions exactly as they are stated, or
        \item eliminate the duals, i.e. replace every instance of $R^\vee$ with $R$ in the definitions above.
\end{enumerate}

To distinguish these two possible definitions, we will refer to \emph{dual Ring-LWE} and \emph{non-dual Ring-LWE}.
Lyubashevsky, Peikert and Regev remark that for cyclotomic fields, dual and non-dual Ring-LWE lead to computationally equivalent problems \cite[Section 3.3]{LPR10}.  They go on to say that over cyclotomics, for implementation and efficiency reasons, dual Ring-LWE is superior.

Generalising dual Ring-LWE to general number fields is the most naive approach, but it presents the problem that working with the dual in a general number field may be difficult.  Still, it is possible there are families of accessible number fields for which this may be the desired avenue.

We will analyse the effect of the Poly-LWE vulnerability on both of these candidate definitions.  In fact, the analysis will highlight some potential differences in their security, already hinted at in the discussion in \cite[Section 3.3]{LPR10}.

\subsection{Isomorphisms from $\theta(R)$ to a polynomial ring}

Suppose $K$ is a \emph{monogenic number field}, meaning that
$R$ is isomorphic to a polynomial ring $P = \ZZ[X]/f(X)$ for some monic irreducible polynomial $f$  ($f$ is a \emph{monogenic polynomial}). In this case, we obtain $R = \gamma R^\vee$, for some $\gamma \in R$ (here, $\gamma$ is a generator of the different ideal), so that $\theta(R^\vee)$ and $\theta(R)$ are related by a linear transformation.  Thus a (dual or non-dual) Ring-LWE problem concerning the lattice $\theta(R)$ or $\theta(R^\vee)$ can be restated as a Poly-LWE problem concerning $P$.

Let $\alpha$ be a root of $f$.  Then $R$ is isomorphic to $P$, via $\alpha \mapsto X$.
An integral basis for $R$ is
        $1, \alpha, \alpha^2, \ldots, \alpha^{n-1}$.
An integral basis for $R^\vee$ is
$        \gamma^{-1}, \gamma^{-1}\alpha, \gamma^{-1}\alpha^2, \ldots, \gamma^{-1}\alpha^{n-1}$.
Let $M_\alpha$ be the matrix whose columns are $\{ \theta(\alpha^i) \}$.
Let $M^\vee_\alpha$ be the matrix whose columns are $\{ \theta(\gamma^{-1}\alpha^i) \}$.
If $\bfv$ is a vector of coefficients representing some $\beta \in K$ in terms of the basis $\{ \alpha^i \}$ for $K/\QQ$, then $\theta(\beta) = M_\alpha \bfv$.  In other words,
$M_\alpha: P \rightarrow \theta(R)$
is an isomorphism (where $P$ is represented as vectors of coefficients).  Similarly,
$M^\vee_\alpha: P \rightarrow \theta(R^\vee)$
is an isomorphism.

\subsection{The spectral norm}

Given an $n \times n$ matrix $M$, its \emph{spectral norm} $\rho = ||M||_2$ is equal to the largest singular value of $M$.  This is also equal to the largest radius of the image of a unit ball under $M$.  This last interpretation allows one to bound the image of a spherical Gaussian distribution of parameter $\sigma$ on the domain of $M$ by another of parameter $\rho \sigma$ on the codomain of $M$ (in the sense that the image of the ball of radius $\sigma$ will map into a ball of radius $\rho \sigma$ after application of $M$).  The spectral norm is bounded above by the Frobenius norm, which is the $\ell_2$-norm on the $n^2$ matrix entries.

The normalized spectral norm of $M$ is defined to be $\rho ' = ||M||_2/\det(M)^{1/n}$.
The condition number of $M$ is $k(M) = ||M||_2||M^{-1}||_2$.

\subsection{Moving the attack from Poly-LWE to Ring-LWE}
\label{sec:movingsub}

Via the isomorphism $M:=M_\alpha^{-1}$ (respectively $M:=(M^\vee_\alpha)^{-1}$), an instance of the non-dual (respectively dual) Ring-LWE problem gives an instance of the Poly-LWE problem in which the error distribution is the image of the error distribution in $\theta(R)$ (respectively $\theta(R^\vee)$).  In general, this may be an elliptic Gaussian distorted by the isomorphism.  If the distortion is not too large, then it may be bounded by a spherical Gaussian which is not too large.  In that case, a solution to the Poly-LWE problem with the new spherical Gaussian error distribution may be possible.  If so, it will yield a solution to the original Ring-LWE problem.

This is essentially the same reduction described in \cite{EHL}. However, those authors assume that the isomorphism is an orthogonal linear map; we are loosening this condition.
The essential question in this loosening is how much the Gaussian distorts under the isomorphism.  Our contribution is an analysis of the particular basis change.

This distortion is governed by the spectral norm $\rho$ of $M$.  If the continuous Gaussian in $\RR^n$ is of parameter $\sigma$ (with respect to the standard basis of $\RR^n$), then the new spherical Gaussian bounding its image is of parameter $\rho\sigma$ with respect to $P$ (in terms of the coefficient representation).  The appropriate analysis for discrete Gaussians is slightly more subtle.
Loosely speaking, we find that a Ring-LWE instance is weak if the following three things occur:

\begin{enumerate}
        \item \label{7monogenic}  $K$ is monogenic.
        \item \label{7root} $f$ satisfies $f(1) \equiv 0 \pmod q$.
        \item \label{7rho} $\rho$ and $\sigma$ are sufficiently small
\end{enumerate}

The first condition guarantees the existence of appropriate isomorphisms to a polynomial ring; the second and third are required for the Poly-LWE attack to apply.
The purpose of the third requirement is that the discrete Gaussian distribution in $\RR^n$ transfers to give vectors $e(x)$ in the polynomial ring having the property that $e(1)$ lies in the range $[-q/4,q/4)$ except with negligible probability; this allows Algorithm \ref{prop:SmallError} and the conclusions of Proposition \ref{prop:SmallError} to apply.

Let us now state our main result.

\begin{theorem}
        \label{thm:redux}
        Let $K$ be a number field such that $K = \QQ(\beta)$, and the ring of integers of $K$ is equal to $\ZZ[\beta]$.  Let $f$ be the minimal polynomial of $\beta$ and suppose $q$ is a prime such that $f$ has root $1$ modulo $q$.
        Finally, suppose that the spectral norm  $\rho$ of $M_\beta^{-1}$ satisfies
        \[
                \rho < \frac{q}{4\sqrt{2\pi}\sigma{n}}.
        \]
        Then the non-dual Ring-LWE decision problem for $K, q, \sigma$ can be solved in time $\widetilde{O}(\ell q)$ with probability $1 - 2^{-\ell}$, using a dataset of $\ell$ samples.
\end{theorem}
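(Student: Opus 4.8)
The plan is to reduce the non-dual Ring-LWE decision problem to the Poly-LWE decision problem via the coefficient-embedding isomorphism $M_\beta^{-1} : \theta(R) \to P$, and then invoke Algorithm~\ref{alg:SmallError} together with Proposition~\ref{prop:SmallError} (Case 1, $\alpha = 1$) on the resulting Poly-LWE instance. Since $R = \ZZ[\beta] \cong P = \ZZ[x]/(f(x))$ and $q$ is prime with $f(1) \equiv 0 \pmod q$, we have the ring homomorphism $\phi : P_q \to \FF_q$ given by $g(x) \mapsto g(1)$, and all the machinery of Section~\ref{sec:attack} is in place provided the transferred error distribution on $P$ is concentrated enough that $e(1) \in [-q/4, q/4)$ except with negligible probability.

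First I would make the reduction explicit: given a non-dual Ring-LWE sample $(a, b = as + e) \in R_q \times R_q$, apply $\theta$ and then $M_\beta^{-1}$ to obtain a pair $(a(x), b(x)) \in P_q \times P_q$ with $b(x) = a(x) s(x) + e(x)$, where $e(x)$ is the coefficient-vector of $e$. This is a ring isomorphism, so it carries uniform samples to uniform samples and Ring-LWE samples to Poly-LWE-shaped samples; only the error distribution changes in character. The error $e$ in the Ring-LWE instance is a (discretized) spherical Gaussian of parameter $\sigma$ in the canonical embedding $\RR^n$, i.e. with respect to the standard basis; applying the linear map $M_\beta^{-1}$ of spectral norm $\rho$ sends a vector of length at most $t$ to one of length at most $\rho t$. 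Hence the transferred continuous error is an (elliptic) Gaussian bounded by a spherical Gaussian of parameter $\rho \sigma$ in the coefficient representation.

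Next I would control the tail. With width $w = \sqrt{2\pi}\,\sigma$, a spherical Gaussian of parameter $\rho\sigma$ in $\RR^n$ lies, except with negligible probability, within a ball of radius roughly $\rho \sigma \sqrt{n}$ (or, matching the truncation-at-$2\sigma$ convention of Section~\ref{sec:attack}, within the analogous bound); I will argue $|e(x)| \le \rho \sigma \sqrt{n}$ up to the stated negligible failure, and that the discretization $\lfloor\Psi\rceil_R$ does not spoil this (a valid discretization perturbs by a bounded amount absorbed into constants). Since $e(1) = \sum_{j} e_j$ is an inner product of the coefficient vector with the all-ones vector of length $\sqrt n$, Cauchy--Schwarz gives $|e(1)| \le \sqrt{n}\,|e(x)| \le \rho \sigma n$. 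Plugging in the hypothesis $\rho < q/(4\sqrt{2\pi}\sigma n)$ yields $|e(1)| < q/(4\sqrt{2\pi}) < q/4$ (more carefully, tracking the $w = \sqrt{2\pi}\sigma$ normalization so that the constant works out to exactly $q/4$), so the event $E_i$ of Proposition~\ref{prop:SmallError} occurs with probability $1$ for the correct guess $g = s(1)$. Proposition~\ref{prop:SmallError}(1) then gives an algorithm running in time $\widetilde O(\ell q)$ that distinguishes Ring-LWE samples from uniform with failure probability at most $2^{-\ell}$, proving the theorem.

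The main obstacle is the tail-bound/discretization step: the honest statement requires comparing the spectral-norm bound on the \emph{continuous} spherical Gaussian of parameter $\sigma$ in the canonical embedding with the \emph{discrete} distribution $\lfloor\Psi\rceil_R$ actually used in Ring-LWE, and getting the constants to line up with the clean inequality $\rho < q/(4\sqrt{2\pi}\sigma n)$. One must (i) pass from $\sigma$ to the width $w = \sqrt{2\pi}\sigma$ used in the truncation convention, (ii) bound $\|e\|$ after the linear map by $\rho \cdot (\text{radius of the truncated Gaussian})$ rather than by a crude worst case, and (iii) check that the discretization error is negligible relative to $q$. Everything else — the ring-isomorphism bookkeeping, the Cauchy--Schwarz step $|e(1)| \le \sqrt n \|e\|$, and the appeal to Proposition~\ref{prop:SmallError} — is routine once this concentration estimate is pinned down.
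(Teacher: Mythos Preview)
Your proposal is correct and follows essentially the same route as the paper: transport samples via $M_\beta^{-1}$, bound the $\ell_2$-norm of the error in the canonical embedding, push it through the spectral norm, then use Cauchy--Schwarz ($|e(1)| \le \sqrt{n}\,\|e(x)\|_2$) and invoke the $\alpha=1$ case of Algorithm~\ref{alg:SmallError}/Proposition~\ref{prop:SmallError}. The only place your write-up is looser than the paper is the tail bound: the paper resolves the constants you flag as an obstacle by citing \cite[Lemma~2.8]{LPR2}, which gives $\|\mathbf v\|_2 < \sqrt{2\pi}\,\sigma\sqrt{n}$ except with probability $2^{-2n}$ for the discrete Gaussian directly (so no separate discretization-error argument is needed), yielding $|e(1)| < \rho\sqrt{2\pi}\,\sigma n < q/4$ on the nose.
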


\begin{proof}
        Sampling a discrete Gaussian with parameter $\sigma$ results in vectors of norm at most $\sqrt{2\pi}\sigma\sqrt{n}$ except with probability at most $2^{-2n}$ \cite[Lemma 2.8]{LPR2}.  Considering the latter to be negligible, then we can expect error vectors to satisfy $|| \mathbf{v} ||_2 < \sqrt{2\pi}\sigma\sqrt{n}$ and their images in the polynomial ring to satsify
        \[
                |e(1)| = || e(x) ||_1 < \sqrt{n}|| e(x) ||_2 < \sqrt{n} \rho \sqrt{2\pi}\sigma\sqrt{n} = \rho \sqrt{2\pi}\sigma n.
        \]
        Therefore, if
        \[
                \rho \sqrt{2\pi}\sigma n < q/4,
        \]
        then we may apply the attack of Section \ref{sec:SmallError} that assumes $f(1)\equiv 0 \pmod q$ and that error vectors lie in $[-q/4,q/4)$.
\end{proof}

In what follows, we find a family of polynomials satisfying the conditions of the theorem, and give heuristic arguments that such families are in fact very common.  The other cases (other than $\alpha=1$) appear out-of-reach for now, simply because the bounds on $\rho$ are much more difficult to attain.  We will not examine them closely.

\subsection{Choice of $\sigma$}

The parameters of Section \ref{sec:parameters} are used in implementations where the Gaussian is taken over $(\ZZ/q\ZZ)^n$, and security depends upon the proportion of this space included in the `bell,' meaning, it depends upon the ratio $q/\sigma$.  In the case of Poly-LWE, sampling is done on the coefficients, which are effectively living in the space $(\ZZ/q\ZZ)^n$, so this is appropriate.  However, in Ring-LWE, the embedding $\theta(R)$ in $\RR^n$ may be very sparse (i.e. $\theta(R^\vee)$ may be very dense).  Still, the security will hinge upon the proportion of $\theta(R)/q\theta(R)$ that is contained in the bell.  We have not seen a discussion of security parameters for Ring-LWE in the literature, and so we propose that the appropriate meaning of the width of the Gaussian, $w$, in this case is
\begin{equation}
        \label{eqn:s}
        w := \sqrt{2\pi} \sigma' := \sqrt{2\pi} \sigma {\det(M_\alpha)}^{1/n},
\end{equation}
where $\sigma'$ is defined by the above equality.
The reason for this choice is that $\theta(R)$ has covolume $\det(M_\alpha)$; a very sparse lattice (corresponding to large determinant) needs a correspondingly large $\sigma$ so that the same proportion of its vectors lie in the bell.

If $\rho$ represents the spectral norm of $M_\alpha^{-1}$ (which has determinant $\det(M_\alpha)^{-1}$), then
\[
        \rho' := \rho \; {\det(M_\alpha)}^{1/n}
\]
is the normalized spectral norm.  Therefore $\rho/\sigma = \rho'/\sigma'$.  Hence the bound of Theorem \ref{thm:redux} becomes
\begin{equation}
        \label{eqn:rhoprime}
        \rho' < \frac{q}{4 w {n}}.
\end{equation}

\section{Provably weak Ring-LWE number fields}
\label{sec:provable}

        Consider the family of polynomials
        \[
                f_{n,q}(x) =  x^{n} + q-1
        \]
        for $q$ a prime.  These satisfy $f(1) \equiv 0 \pmod q$.  By the Eisenstein criterion, they are irreducible whenever $q-1$ has a prime factor that appears to exponent $1$.  These polynomials have discriminant \cite{Masser} given by
        \[
 (-1)^{\frac{n^2-n}{2}}n^n(q-1)^{n-1}.
        \]

        \begin{proposition}
                \label{prop:family1}
              Let $n$ be power of a prime $\ell$.  If $q-1$ is squarefree and $\ell^2 \nmid ((1-q)^n-(1-q))$ then the polynomials $f_{n,q}$ are monogenic.
        \end{proposition}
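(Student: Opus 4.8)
The plan is to prove the equivalent statement $\ZZ[\beta] = \mathcal{O}_K$, where $\beta$ denotes a root of $f_{n,q}$ and $K = \QQ(\beta)$, by verifying \textbf{Dedekind's criterion} at every prime that could divide the index. Since
\[
        [\mathcal{O}_K:\ZZ[\beta]]^{2}\,\Disc(K) = \Disc(f_{n,q}),
\]
and $\Disc(f_{n,q}) = \pm\, n^{n}(q-1)^{n-1}$ by the formula recalled just above, while $n = \ell^{k}$ is a power of the single prime $\ell$, the only primes dividing $[\mathcal{O}_K:\ZZ[\beta]]$ are $\ell$ and the prime divisors of $q-1$. (Irreducibility of $f_{n,q}$ is the Eisenstein argument already noted, valid because $q-1$ is squarefree.) For each such prime $p$ I would factor $\bar f_{n,q} = \prod_i \bar g_i^{\,e_i}$ in $\FF_p[x]$, choose monic lifts $g_i \in \ZZ[x]$, set $g = \prod_i g_i$ and $h = \prod_i g_i^{\,e_i-1}$ so that $\overline{gh} = \bar f_{n,q}$, and form $F = (gh - f_{n,q})/p \in \ZZ[x]$; then $p \nmid [\mathcal{O}_K:\ZZ[\beta]]$ if and only if $\gcd(\bar F, \bar g, \bar h) = 1$ in $\FF_p[x]$.

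For a prime $p \mid q-1$ with $p \neq \ell$ the computation is immediate: $\bar f_{n,q} = x^{n}$, so one may take $g = x$, $h = x^{\,n-1}$, and then $F = (x^{n} - f_{n,q})/p = -(q-1)/p$. Squarefreeness of $q-1$ gives $p^{2} \nmid q-1$, so $\bar F$ is a nonzero constant and $\gcd(\bar F, \bar g, \bar h) = 1$. This disposes of every prime divisor of $q-1$ other than $\ell$, and is the only place squarefreeness enters.

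The essential case is $p = \ell$. Here the key point will be that, because $n = \ell^{k}$, all middle binomial coefficients $\binom{n}{j}$ with $0 < j < n$ vanish modulo $\ell$ (Lucas/Kummer) and $(\overline{q-1})^{n} = \overline{q-1}$ by Fermat's little theorem, so that
\[
        \bar f_{n,q} = x^{n} + \overline{q-1} = \bigl(x + \overline{q-1}\bigr)^{n} \quad\text{in } \FF_\ell[x].
\]
Accordingly I would take $g = x + (q-1)$ and $h = (x+q-1)^{\,n-1}$ (which satisfy $\overline{gh} = \bar f_{n,q}$ whether or not $\ell$ divides $q-1$), giving
\[
        F = \frac{(x+q-1)^{n} - x^{n} - (q-1)}{\ell}\,,
\]
which lies in $\ZZ[x]$ by the same two facts (middle coefficients divisible by $\ell$; constant term $(q-1)^{n}-(q-1)$ divisible by $\ell$). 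Since $\gcd(\bar g, \bar h) = x + \overline{q-1}$, the criterion reduces to deciding whether $(x + \overline{q-1}) \mid \bar F$, equivalently whether $\ell \mid F(-(q-1))$. A short evaluation gives
\[
        F(-(q-1)) = \frac{-(-(q-1))^{n} - (q-1)}{\ell} = -\,\frac{(1-q)^{n} - (1-q)}{\ell}\,,
\]
so $(x + \overline{q-1}) \nmid \bar F$ exactly when $\ell^{2} \nmid \bigl((1-q)^{n} - (1-q)\bigr)$, which is precisely the hypothesis. Hence Dedekind's criterion holds at $\ell$ as well, $[\mathcal{O}_K:\ZZ[\beta]] = 1$, and $f_{n,q}$ is monogenic.

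I expect the only real obstacle to be the bookkeeping in the $p = \ell$ case: correctly identifying the $n$-fold repeated factor $x + \overline{q-1}$, checking that $F$ has integral coefficients (exactly where the prime-power shape of $n$ and Fermat's little theorem enter), and tracking the signs through $F(-(q-1))$ so that the divisibility condition lands on the stated hypothesis. Keeping $(-1)^{n}$ symbolic throughout makes this uniform, so no separate treatment of the case $\ell = 2$ is needed.
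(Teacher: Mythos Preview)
Your argument via Dedekind's criterion is correct. The case split is right, the factorisation $\bar f_{n,q}=(x+\overline{q-1})^{n}$ in $\FF_\ell[x]$ is valid for every $n=\ell^{k}$ by Lucas plus iterated Fermat, the polynomial $F$ is integral for exactly the reasons you give, and the evaluation $F(-(q-1))=-\bigl((1-q)^{n}-(1-q)\bigr)/\ell$ lands precisely on the stated hypothesis. Your observation that the $\ell=2$ case requires no special treatment is also correct.

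The paper takes a different route: it does not argue directly at all, but cites Theorem~5.1.4 of Gassert's thesis, noting separately that the restriction to odd $\ell$ in that reference is unnecessary for the monogenicity conclusion. Your approach has the advantage of being self-contained and of making transparent exactly where each hypothesis enters (squarefreeness of $q-1$ only at primes $p\mid q-1$ with $p\neq\ell$; the condition $\ell^{2}\nmid(1-q)^{n}-(1-q)$ only at $p=\ell$), and it handles $\ell=2$ uniformly without a side remark. The paper's route is shorter on the page but pushes the work into an external reference.
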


        \begin{proof}
                This is a result of Gassert in \cite[Theorem 5.1.4]{Gassert}.  As stated, Theorem 5.1.4 of \cite{Gassert} requires $\ell$ to be an odd prime.  However, for the monogenicity portion of the conclusion, the proof goes through for $p=2$.
        \end{proof}

        \begin{proposition}
                \label{prop:family2}
                Suppose that $f_{n,q}$ is irreducible, and the associated number field has $r_2$ complex embeddings.  Then $r_2 = n/2$ or $(n-1)/2$ (whichever is an integer), and the normalized spectral norm of $M_\alpha^{-1}$ is exactly
                \[
                        2^{-r_2/n}\sqrt{(q-1)^{1-\frac{1}{n}}}.
        \]
        \end{proposition}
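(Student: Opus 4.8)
The plan is to compute the two relevant quantities directly from the structure of $f_{n,q}(x) = x^n + q - 1$. First I would settle the claim about $r_2$. The roots of $f_{n,q}$ are $\zeta \cdot (1-q)^{1/n}$ as $\zeta$ ranges over the $n$-th roots of unity. Since $1-q$ is a negative real number (as $q \geq 2$), the root $(1-q)^{1/n}$ is real only when $n$ is odd (giving one real root, the rest in conjugate pairs), and is never real when $n$ is even (all roots come in conjugate pairs). Hence $r_2 = (n-1)/2$ when $n$ is odd and $r_2 = n/2$ when $n$ is even, which is exactly the stated dichotomy.

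Next I would compute the normalized spectral norm $\rho' = \|M_\alpha^{-1}\|_2 / \det(M_\alpha^{-1})^{1/n} = \|M_\alpha^{-1}\|_2 \cdot \det(M_\alpha)^{1/n}$. For the determinant: $\det(M_\alpha)^2$ is, up to sign, the discriminant of the lattice $\theta(R)$ under the trace form, which for a monogenic field equals the polynomial discriminant up to the factor coming from the real-vs-complex bookkeeping in the embedding $\theta$. Concretely, the embedding $\theta$ (taking real and imaginary parts rather than the complex coordinates themselves) introduces a factor of $2^{-r_2}$ relative to the Minkowski embedding, so $\det(M_\alpha) = 2^{-r_2}\sqrt{|\Disc(f_{n,q})|}$. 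Using the stated discriminant $(-1)^{(n^2-n)/2} n^n (q-1)^{n-1}$, this gives $\det(M_\alpha) = 2^{-r_2} n^{n/2} (q-1)^{(n-1)/2}$, hence $\det(M_\alpha)^{1/n} = 2^{-r_2/n} n^{1/2} (q-1)^{(n-1)/(2n)} = 2^{-r_2/n}\sqrt{n}\,\sqrt{(q-1)^{1-1/n}}$.

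The main obstacle is the spectral norm $\|M_\alpha^{-1}\|_2$ itself; I would expect this to be the crux of the argument. The key observation is that $M_\alpha^{-1}$ has a very clean form because $f_{n,q}$ is a binomial. Writing $\omega = (q-1)^{1/n}$ for the positive real $n$-th root, each embedding sends $\alpha \mapsto \zeta \omega$ for an $n$-th root of unity $\zeta$, so the matrix $M_\alpha$ (columns $\theta(\alpha^i)$, rows indexed by embeddings) factors as a diagonal scaling by powers of $\omega$ composed with (the real/imaginary-parts version of) a Vandermonde matrix on the $n$-th roots of unity — essentially a DFT matrix, which is a scalar multiple of an orthogonal matrix. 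Thus $M_\alpha^{-1}$ is (up to the orthogonal DFT factor, which does not change singular values) a diagonal matrix with entries $\omega^{-i}$, $i = 0, \ldots, n-1$; careful tracking of the $\sqrt{2}$ and normalization constants from the real embedding $\theta$ and the DFT shows all singular values coincide, so $M_\alpha^{-1}$ is a scalar multiple of an orthogonal matrix and $\|M_\alpha^{-1}\|_2 = \det(M_\alpha^{-1})^{1/n} = \det(M_\alpha)^{-1/n}$. Substituting into $\rho' = \|M_\alpha^{-1}\|_2 \cdot \det(M_\alpha)^{1/n}$ would then give $\rho' = 1$... which contradicts the claimed answer, so in fact the singular values of $M_\alpha^{-1}$ are \emph{not} all equal — the diagonal scaling by $\omega^{-i}$ genuinely spreads them, with the largest being $\omega^{-(n-1)}$ (since $\omega > 1$). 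Then $\|M_\alpha^{-1}\|_2 = \omega^{-(n-1)}$ up to the fixed normalization constant of the DFT/real-embedding factor, which one checks is $1$ after accounting for the $2^{-r_2}$ and the DFT scaling cancelling. Multiplying $\|M_\alpha^{-1}\|_2 = (q-1)^{-(n-1)/n}$ by $\det(M_\alpha)^{1/n} = 2^{-r_2/n}\sqrt{n}(q-1)^{(n-1)/(2n)}$ does not immediately give the stated formula either, so the delicate point — and where I would spend the most care — is the precise bookkeeping of constants in $\theta$ versus the Minkowski embedding and in the Vandermonde-to-DFT normalization, pinning down exactly which power of $\omega$ and which constant survives so that the final product collapses to $2^{-r_2/n}\sqrt{(q-1)^{1-1/n}}$. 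I would organize this by writing $M_\alpha = D_\omega V$ with $V$ the fixed (root-of-unity) matrix independent of $q$, compute $\|M_\alpha^{-1}\|_2 = \|V^{-1}D_\omega^{-1}\|_2$ by relating it to $\|D_\omega^{-1}\|_2$ times the condition-number contribution of $V$, and evaluate the $V$-constant once and for all at $q$-independent values.
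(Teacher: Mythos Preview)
Your overall strategy---factor $M_\alpha$ as (something DFT-like) times a diagonal scaling in powers of $\omega = (q-1)^{1/n}$, read off the singular values, then multiply the spectral norm of $M_\alpha^{-1}$ by $\det(M_\alpha)^{1/n}$---is exactly what the paper does.  The paper phrases it as ``the columns of $M_\alpha$ are pairwise orthogonal, so the singular values are the column lengths.''  Your setup and your computation of $\det(M_\alpha)^{1/n} = 2^{-r_2/n}\sqrt{n}\,(q-1)^{(n-1)/(2n)}$ are both correct.

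The gap is in the spectral-norm step, and it is not just missing bookkeeping---you have two concrete errors that send the computation off the rails.  First, in the factorisation $M_\alpha = V D_\omega$ (columns scaled by $\omega^i$, $i=0,\dots,n-1$), the Vandermonde-on-roots-of-unity factor $V$ is not unitary: it is $\sqrt{n}$ times a unitary.  So the singular values of $M_\alpha$ are $\sqrt{n}\,\omega^i$, not $\omega^i$.  Second, you inverted the inequality: since $\omega>1$, the \emph{smallest} singular value of $M_\alpha$ is $\sqrt{n}\,\omega^0=\sqrt{n}$, hence the \emph{largest} singular value of $M_\alpha^{-1}$ is $1/\sqrt{n}$---not $\omega^{-(n-1)}$, which is the smallest.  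With these two fixes,
\[
\rho' \;=\; \|M_\alpha^{-1}\|_2 \cdot \det(M_\alpha)^{1/n}
\;=\; \frac{1}{\sqrt{n}} \cdot 2^{-r_2/n}\sqrt{n}\,(q-1)^{(n-1)/(2n)}
\;=\; 2^{-r_2/n}\sqrt{(q-1)^{1-1/n}},
\]
and the $\sqrt{n}$ you were worried about cancels cleanly.  There is no residual ``$V$-constant'' to chase down; the real/imaginary-part map $\theta$ preserves the Hermitian inner product up to the $2^{-r_2}$ already absorbed into the determinant, so the orthogonality and column lengths computed in the complex picture transfer directly.
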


        \begin{proof}
                Let $a$ be a positive real $n$-th root of $q-1$.  Then the roots of the polynomial are exactly $a \zeta_{2n}^j$ for $j$ odd such that $1 \le j < 2n$.  The embeddings take $a\zeta_{2n}$ to each of the other roots.  There is $r_1=1$ real embedding if $n$ is odd (otherwise $r_1=0$), and the rest are $r_2$ complex conjugate pairs, so that $n = r_1 + 2r_2$.
                Then the dot product of the $r$-th and $s$-th columns of $M_\alpha^{-1}$ is
                \[
                        \sum_{k=0}^{n-1} a^{r+s}\zeta_{2n}^{(r-s)(2k+1)} = 0
                \]
Therefore, the columns of the matrix are orthogonal to one another.  Hence, the matrix is diagonalizeable, and its eigenvalues are the lengths of its column vectors, which is for the $r$-th column,
\[
\left( \sum_{k=0}^{n-1} || a^r \zeta_{2n}^{2k+1} ||^2 \right)^{1/2} = \sqrt{ n}a^r
\]
Therefore the smallest singular value of $M_\alpha$ is $\sqrt{n}$ and the largest is $\sqrt{n}a^{n-1}$.  Correspondingly, the largest singular values of $M_\alpha^{-1}$ is $1/\sqrt{n}$.

A standard result of number theory relates the determinant of $M_\alpha$ to the discriminant of $K$ via
\[
        \operatorname{det}(M_\alpha) = 2^{-r_2}\sqrt{\operatorname{disc}(f_{n,q})},
\]
where $r_2 \le \frac{n}{2}$ is the number of complex embeddings of $K$.
Combining the smallest singular value with this determinant (the discriminant is given explicitly at the beginning of this section) gives the result.
        \end{proof}

        \begin{theorem}
                \label{thm:redux2}
                Suppose $q$ is prime, $n$ is an integer and $f = f_{n,q}$ satisfies
                \begin{enumerate}
                        \item $n$ is a power of a prime $p$,
                        \item $q-1$ is squarefree,
                        \item $p^2 \nmid ((1-q)^n-(1-q))$,
                        \item we have $\tau > 1$, where
                                \[
                                        \tau := \frac{  q }{2\sqrt{2} w n(q-1)^{\frac12-\frac{1}{2n}}}.
                                \]
                \end{enumerate}
                Then the non-dual Ring-LWE decision problem for $f$ and $w$ (defined by \eqref{eqn:s}) can be solved in time $\widetilde{O}(\ell q)$ with probability $1 - 2^{-\ell}$, using a dataset of $\ell$ samples.
        \end{theorem}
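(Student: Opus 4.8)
The plan is to obtain Theorem~\ref{thm:redux2} by assembling three earlier results: Proposition~\ref{prop:family1} (monogenicity of $f_{n,q}$), Proposition~\ref{prop:family2} (the exact normalized spectral norm of $M_\alpha^{-1}$), and Theorem~\ref{thm:redux} (the reduction of the non-dual Ring-LWE decision problem to the Poly-LWE attack at the root $\alpha=1$). Hypotheses (1)--(4) are designed to be exactly what is needed to apply Theorem~\ref{thm:redux} to $K=\QQ(\beta)$, where $\beta$ is a root of $f_{n,q}$.

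First I would invoke Proposition~\ref{prop:family1}, whose hypotheses are precisely (1)--(3) with $\ell=p$: it tells us $f=f_{n,q}$ is monogenic, so in particular $f$ is monic irreducible and $\OK=\ZZ[\beta]$. (Alternatively, irreducibility of $f$ follows directly from the Eisenstein criterion, as noted before Proposition~\ref{prop:family1}, since a squarefree $q-1$ has a prime factor to exponent one.) Next, $f(1)=1^n+q-1=q\equiv 0\pmod{q}$, so $f$ has the root $1$ modulo $q$. Thus the first two hypotheses of Theorem~\ref{thm:redux} hold for $K$, $q$, $\sigma$, and it remains only to check the spectral-norm bound. I would check it in the normalized form \eqref{eqn:rhoprime}, namely $\rho'<q/(4wn)$, where $\rho'$ is the normalized spectral norm of $M_\beta^{-1}$ and $w$ the normalized width \eqref{eqn:s}. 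By Proposition~\ref{prop:family2} (now applicable since $f$ is irreducible), $r_2=n/2$ or $(n-1)/2$ and
\[
        \rho' = 2^{-r_2/n}\sqrt{(q-1)^{1-\frac1n}} = 2^{-r_2/n}(q-1)^{\frac{n-1}{2n}}.
\]
Substituting this into $\rho'<q/(4wn)$ and rearranging, the bound becomes
\[
        \frac{q}{4\cdot 2^{-r_2/n}\, w\, n\, (q-1)^{\frac12-\frac{1}{2n}}} > 1 .
\]
Because $r_2=n/2$ when $p=2$ and $r_2=(n-1)/2$ when $p$ is odd, the constant $4\cdot 2^{-r_2/n}$ equals $2\sqrt{2}$ in the first case (and, in the second, differs from $2\sqrt2$ only by the harmless factor $2^{1/(2n)}$), so this inequality is exactly condition (4), $\tau>1$. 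Finally, the running time $\widetilde{O}(\ell q)$ and the success probability $1-2^{-\ell}$ are inherited verbatim from Theorem~\ref{thm:redux} (and ultimately from Proposition~\ref{prop:SmallError}).

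I expect the only delicate point to be the bookkeeping: lining up the powers of $2$, of $n$, and of $q-1$ coming from $\rho'$ in Proposition~\ref{prop:family2}, from the normalization \eqref{eqn:s} relating $\sigma$ to $w$, and from the definition of $\tau$, and in particular keeping the two cases $r_2=n/2$ and $r_2=(n-1)/2$ straight. There is no new mathematical content here; all of the substance lives in Propositions~\ref{prop:family1} and~\ref{prop:family2} and in Theorem~\ref{thm:redux}, and this argument only records that their hypotheses combine as claimed.
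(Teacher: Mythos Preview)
Your proposal is correct and follows exactly the paper's own approach: invoke Proposition~\ref{prop:family1} for monogenicity, observe $f_{n,q}(1)=q\equiv 0\pmod q$, plug the normalized spectral norm from Proposition~\ref{prop:family2} into the bound \eqref{eqn:rhoprime}, and read off the conclusion from Theorem~\ref{thm:redux}. The one subtlety you flag---that for odd prime powers $n$ the constant is $2\sqrt{2}\cdot 2^{1/(2n)}$ rather than $2\sqrt{2}$, so strictly one needs $\tau>2^{1/(2n)}$ rather than $\tau>1$---is also glossed over in the paper's terse proof; it is immaterial for the intended power-of-two case and negligible otherwise.
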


        \begin{proof}
                Under the stated conditions, $f$ has a root $1$ modulo $q$, and therefore Poly-LWE is vulnerable to the attack specified in Algorithm \ref{alg:SmallError}.  The other properties guarantee the applicability of Theorem \ref{thm:redux} via Proposition \ref{prop:family1} and \ref{prop:family2}.
        \end{proof}

        Under the assumption that $q-1$ is infinitely often squarefree, this provides a family of examples which are susceptible to attack (taking, for example, $n$ as an appropriate power of $2$; note that in this case item (3) is automatic).

        Interestingly, their susceptibility increases as $q$ increases relative to $n$.  It is the ratio $\sqrt{q}/n$, rather than their overall size, which controls the vulnerability (at least as long as $q$ is small enough to run a loop through the residues modulo $q$).

        The quantity $\tau$ can be considered a measure of security against this attack ; it should be small to indicate higher security.  For the various parameters indicated in Section \ref{sec:parameters}, the value of $\tau$ is:
\begin{center}
        {\renewcommand{\arraystretch}{1.25}
         \setlength{\tabcolsep}{0.5em}
  \begin{tabular}{| l | l  | l |l |l | l |}
          \hline
          parameters & $P_{LP1}$ & $P_{LP2}$ & $P_{LP3}$ & $P_{GF}$ & $P_{BCNS}$ \\ \hline
          $\tau$ & 0.0136 & 0.0108 & 0.0090 & 0.0063 & 5.0654 \\ \hline
  \end{tabular}}
\end{center}

        The bound on $\tau$ in Theorem \ref{thm:redux2} is stronger than what is required in practice for the attack to succeed.  In particular, the spectral norm of the transformation $M_\alpha^{-1}$ does not accurately reflect the average behaviour; it is worst case.  As $n$ increases, it is increasingly unlikely that error samples happen to lie in just the right direction from the origin to be inflated by the full spectral norm.  Furthermore, we assumed in the analysis of Theorem \ref{thm:redux} an overly generous bound on the error vectors.

        The proof is in the pudding:  in Section \ref{sec:exampleattack} we have implemented several successful attacks.

\section{Heuristics on the prevalence of weak Ring-LWE number fields}
\label{sec:heuristic}

In this section, we argue that many examples satisfying Theorem \ref{thm:redux} are very likely to exist.  In fact, each of the individual conditions is fairly easy to attain.  We will see in what follows that given a random monogenic number field, there is with significant probability at least one prime $q$ for which Ring-LWE is vulnerable (i.e. the bound \eqref{eqn:rhoprime} is attained) for parameters comparable to those of $P_{BNCS}$.  Note that in this parameter range, the spectral norm is expensive to compute directly.

\subsection{Monogenicity}

        Monogenic fields are expected to be quite common in the following sense.  If $f$ of degree $n \ge 4$ is taken to be a random polynomial (i.e. its coefficients are chosen randomly), then it is conjecturally expected that with probability $\gtrsim 0.307$, $P$ will be the ring of integers of a number field \cite{K}.  In particular, if $f$ has squarefree discriminant, this will certainly happen.  Furthermore, cyclotomic fields are monogenic, as are the families described in the last section.

        However, at degrees $n \sim 2^{10}$, the discriminant of $f$ is too large to test for squarefreeness, so testing for monogenicity may not be feasible.  Kedlaya has developed a method for constructing examples of arbitrary degree \cite{K}.

\subsection{Examples, $n = 2^{10}$, $q \sim 2^{32}$}
\label{sec:8examples}

Consider the following examples:
\begin{align*}
        f(x) = x^{1024} + (2^{31}+14)x + 2^{31}, &\quad q = 4294967311, \\
      f(x) = x^{1024} + (2^{31}+2^{30}+22)x + (2^{31}+2^{30}), &\quad q = 6442450967,  \\
     f(x) = x^{1024} + (2^{31}+2^{30}+29)x + (2^{31}+2^{30}+5), &\quad q = 6442450979.
\end{align*}

These examples are discussed at greater length in Section \ref{sec:6examples}, where the method for constructing them is explained.
In each case, $f(1) \equiv 0 \pmod q$.

In this size range, we were not able to compute the spectral norm of $K$ directly in a reasonable amount of time.  In the next few sections we will make persuasive heuristic arguments that it can be expected to have $\rho'$ well within the required bound \eqref{eqn:rhoprime}, i.e. $\rho' < 2^{17}$.  That is, we expect these examples and others like them to be vulnerable.

\subsection{Heuristics for the spectral norm}
\label{sec:hs}

In this section, we will bound the normalized spectral norm by a multiple of the condition number.  Then, for polynomials of the form $x^n + ax+b$, we will argue heuristically that this is frequently small.

Let us recall a standard result of number theory.  For a number field $K$ with $r_1$ real embeddings and $r_2$ conjugate pairs of complex embeddings, the determinant of the canonical embedding is
\begin{equation}
        \label{eqn:detmf}
        \det(M_f) = \sqrt{\Delta_K}2^{-r_2}.
\end{equation}
Under the assumption that $\ZZ[X]/f(X)$ is indeed a ring of integers, $\Delta_K = \Disc(f)$.   
The condition number $k(M_f)$ satisfies
\begin{equation}
        \label{eqn:condtn}
 k(M_f) = ||M_f^{-1}||_2 ||M_f||_2,
\end{equation}
while the normalized spectral norm is
\begin{equation}
        \label{eqn:normrho}
 \rho' = ||M_f^{-1}||_2 \det(M_f)^{1/n}.
\end{equation}
We wish to show that
\[
        \rho' \le 2k(M_f).
\]
To that end, we combine \eqref{eqn:condtn} and \eqref{eqn:normrho}:
\[
        \frac{2k(M_f)}{\rho'} = \frac{2||M_f||_2}{\det(M_f)^{1/n}} \ge 2 \det(M_f)^{1/n} = \Delta_K^{1/2n} 2^{1-r_2/n} \ge 1.
\]

Now restrict to polynomials of the form $f(x) =  x^n + ax + b$.
The condition number of $M_f$ is hard to access theoretically, but heuristically, for random perturbations of any fixed matrix, most perturbations are well-conditioned (having small condition number) \cite{TV}.  The matrix $M_f$ is a perturbation of $M_p$ for $p = x^n+1$.
The extent of this perturbation can be bounded in terms of the coefficients $a$ and $b$, since the perturbation is controlled by the perturbation in the roots of the polynomial.  It is a now-standard result in numerical analysis, due to Wilkinson, that roots may be ill-conditioned in this sense, but the condition number can be bounded in terms of the coefficients $a$ and $b$.  This implies that, heuristically, $k(M_f)$ is likely to be small quite frequently.

In conclusion, we expect to find that many $f(x)$ will have $\rho'$ quite small.

\subsection{Experimental evidence for the spectral norm}

        We only ran experiments in a small range due to limitations of our  Sage implementation (\cite{S}).  The polynomials $x^{32}+ax+b$, $-60 \le a,b \le 60$ were plotted on a $\max\{a,b\}$-by-$\rho'$ plane.  The result is as follows:

        \begin{center}
                \includegraphics[width=4.0in]{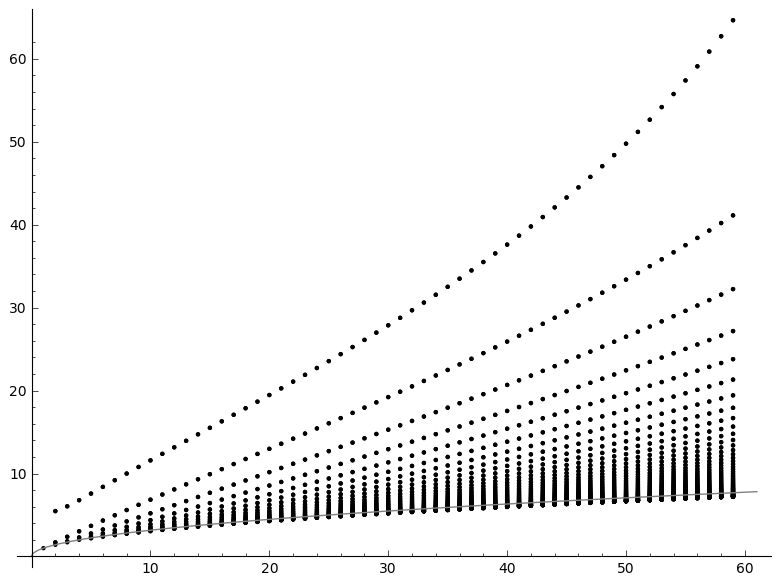}
        \end{center}

        There are some examples with quite high $\rho'$, but the majority cluster low.  The grey line is $y=\sqrt{x}$.  Therefore, we may conjecture based on this experiment, that we may expect to find plenty of $f$ satisfying
$\rho' < \sqrt{\max\{a,b\}}$.

Experimentally, we may guess that the examples of Section \ref{sec:8examples}, for which $n = 2^{10}$ and $\max\{a,b\} \le 2^{30}$, will frequently satisfy $\rho' < 2^{15}$, which is the range required by Theorem \ref{thm:redux}.  (Note that the coefficients cannot be taken smaller if $f$ is to have root $1$ modulo a prime $q \sim 2^{31}$.)

\section{Weak Poly-LWE number fields}
\label{sec:ex1}

\subsection{Finding $f$ and $q$ with roots of small order}
\label{sec:q}

It is relatively easy to generate polynomials $f$ and primes $q$ for which $f$ has a root of given order modulo $q$.  There are two approaches:  given $f$, find suitable $q$; and given $q$, find suitable $f$.  Since there are other conditions one may require for other reasons (particularly on $f$), we focus on the first of these.

Given $f$, in order to find $q$ such that $f$ has a root of small order (this includes the cases $\alpha = \pm 1$), the following algorithm can be applied.

\begin{algorithm}[H]
			\caption{Finding primes $q$ such that $f(x)$ has a root of small order modulo $q$}
			\label{alg:q}
                        \raggedright
{\bf Input:} A non-cyclotomic irreducible polynomial $f(X) \in \ZZ[X]$; and an integer $m \ge 1$.

{\bf Output:} A prime $q$ such that $f(X)$ has a root of order $m$ modulo $q$.

\begin{enumerate}
        \item Let $\Phi_m(X)$ be the cyclotomic polynomial of degree $m$.  Apply the extended Euclidean algorithm to $f(X)$ and $\Phi_m(X)$ over the ring $\QQ[X]$ to obtain $a(X)$, $b(X)$ such that $$a(X)f(X) + b(X)\Phi_m(X) = 1.$$  (Note that $1$ is the GCD of $f(X)$ and $\Phi_m(X)$ by assumption.)
        \item Let $d$ be the least common multiple of all the denominators of the coefficients of $a$ and $b$.
        \item Factor $d$.
        \item Return the largest prime factor of $d$.
\end{enumerate}
\end{algorithm}

It is also possible to generate examples by first choosing $q$ and searching for appropriate $f$.  For example, taking
$f(x) = \Phi_m(x)g(x) + q$
where $g(x)$ is monic of degree $m-n$ suffices.
Both methods can be adapted to find $f$ having any specified root modulo $q$.

\subsection{Examples, $n \sim 2^{10}$, $q \sim 2^{32}$}
\label{sec:6examples}

For the range $n \sim 2^{10}$, we hope to find $q \sim 2^{32}$.  Examples were found by applying Algorithm \ref{alg:q} to polynomials $f(x)$ of the form
$x^n + ax + b$
for $a, b$ chosen from a likely range.  Examples are copious and not difficult to find (see Appendix \ref{A2} for code).

{\bf Case $\alpha=1$.}  A few typical examples of irreducible $f$ with $1$ as a root modulo $q$ are:
\vspace{-0.25em}
\begin{align*}
        f(x) = x^{1024} + (2^{31}+14)x + 2^{31}, &\quad q = 4294967311, \\
      f(x) = x^{1024} + (2^{31}+2^{30}+22)x + (2^{31}+2^{30}), &\quad q = 6442450967,  \\
     f(x) = x^{1024} + (2^{31}+2^{30}+29)x + (2^{31}+2^{30}+5), &\quad q = 6442450979.
\end{align*}
These examples satisfy condition $1$ of Proposition \ref{prop:SmallError} with $\sigma=3$, hence are vulnerable.

{\bf Case $\alpha = -1$.}  Here is an irreducible $f$ with root $-1$:
\begin{align*}
        f(x) = x^{1024} + (2^{31}+9)x - (2^{31}+7), &\quad q = 4294967311 \sim 2^{32}.
\end{align*}
This example similarly satisfies condition $1$ of Proposition \ref{prop:SmallError} and so is vulnerable.

{\bf Case $\alpha$ small order.}  Here is an irreducible $f$ with a root of order $3$:
\begin{align*}
        f(x) = x^{1024} + (2^{16}+2)x - 2^{16}, &\quad q = 1099514773507 \sim 2^{40}.
\end{align*}
This example has $q \sim 2^{40}$; taking this larger $q$ allows us to satisfy \eqref{eqn:SmallSet} of Proposition \ref{prop:SmallSet} and hence it is vulnerable to Algorithm \ref{alg:SmallSet}.

\subsection{Examples of weak Poly-LWE number fields with additional properties}
\label{sec:ex2}

In this section we will give examples of number fields $K = \QQ[x]/(f(x))$ which are vulnerable to our attack on Poly-LWE.  They will be vulnerable by satisfying one of the following two possible conditions:

\vspace{0.25em}

\begin{minipage}[b]{0.90\linewidth}
\begin{description}
        \item[R\label{small}] $f(1) \equiv 0 \; \pmod q$.
        \item[{R$^\prime$}\label{smallprimeprime}] $f$ has a root of small order modulo $q$.
\end{description}
\end{minipage}

\vspace{0.25em}
\noindent
We must also require:

\vspace{0.25em}

\begin{minipage}[b]{0.90\linewidth}
\begin{description}
        \item[Q\label{big}] The prime $q$ can be chosen suitably large.
\end{description}
\end{minipage}

\vspace{0.25em}

The examples we consider are cyclotomic fields and therefore Galois and monogenic. One should note that guaranteeing these two conditions together is nontrivial in general. In addition to these, there are additional conditions for the attack explained in \cite{EHL}.  The desirable conditions are:
\vspace{0.5em}

\begin{minipage}[b]{0.90\linewidth}
\begin{description}
        \item[G\label{galois}] $K$ is Galois.
        \item[M\label{mono}] $K$ is monogenic.
\item[S\label{split}] The ideal $(q)$ splits completely in the ring of integers $R$ of $K$, and $q \nmid [R:\mathbb Z[\beta]]$.
\item[O\label{orthogonal}] The transformation between the canonical embedding of K and the power basis representation of K is given by a scaled orthogonal matrix.
\end{description}
\end{minipage}

\vspace{0.25em}

Conditions \ref{galois} and \ref{split} are needed for the Search-to-Decision reduction and Conditions \ref{mono} and \ref{orthogonal} are needed for the Ring-LWE to Poly-LWE reduction in~\cite{EHL}.

Note that checking the splitting condition for fields of cryptographic size is not computationally feasible in general.  However, we are able to give a sufficient condition for certain splittings which is quite fast to check.

\begin{proposition} \label{easysplit}
Using the notation as above, if $f(2) \equiv 0 \mod q$ then $q$ splits in $R.$
\end{proposition}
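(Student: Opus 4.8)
The plan is to exploit the hypothesis that $R = \ZZ[\beta]$, so that $R/qR \cong \FF_q[x]/(f(x))$ and the splitting behaviour of $(q)$ is read off directly from the factorization of $f$ modulo $q$ (the classical Dedekind/Kummer correspondence). Under this identification, $(q)$ splits completely in $R$ if and only if $f(x)$ factors into $n$ distinct linear factors over $\FF_q$, equivalently $f$ has $n$ distinct roots in $\FF_q$. So the entire content reduces to: if $f(2) \equiv 0 \pmod q$, then $f$ splits completely over $\FF_q$ with distinct roots.

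The key observation is that $f$ is not an arbitrary polynomial: by construction (Section~\ref{sec:provable}, the family $f_{n,q}(x) = x^n + q - 1$, or more generally the examples of Section~\ref{sec:ex1} and~\ref{sec:ex2}) its reduction mod $q$ has a very rigid form. For $f = f_{n,q}$ we have $f \equiv x^n - 1 \pmod q$, so the condition $f(2) \equiv 0 \pmod q$ says exactly that $2^n \equiv 1 \pmod q$, i.e.\ $2$ has order dividing $n$ in $\FF_q^\times$. First I would record this reformulation. Next, since $2^n \equiv 1 \pmod q$ forces $q \nmid 2$ and $\gcd(n,q)=1$ is already in force (the discriminant formula at the start of Section~\ref{sec:provable} shows $q \nmid \Disc(f)$ precisely when $q \nmid n$, which holds here because $q-1$ is squarefree hence $q$ is odd and $n$ is a prime power coprime to $q$), the polynomial $x^n - 1$ is separable over $\FF_q$; and $2^n \equiv 1$ means the subgroup $\mu_n \subseteq \FF_q^\times$ of $n$-th roots of unity has order $n$, i.e.\ $x^n - 1$ splits completely into distinct linear factors over $\FF_q$. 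Therefore $f \equiv x^n - 1 \pmod q$ does too, and Dedekind's theorem gives $(q) = \prod_{i=1}^n \gp_i$ with the $\gp_i$ distinct primes of norm $q$.

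The main obstacle, and the part that needs care, is the \emph{scope} of the statement: "using the notation as above" must be pinned down. If the proposition is meant for the general monogenic $f$ of the form $x^n + ax + b$ (not just $f_{n,q}$), then $f(2)\equiv 0$ alone does \emph{not} obviously force complete splitting, and one genuinely needs the reduction of $f$ mod $q$ to be (a twist of) $x^n - 1$ or $x^n \pm$ a unit whose $n$-th roots already lie in $\FF_q$. So the real work is: (i) state precisely that we are in the setting where $f \bmod q$ is $\FF_q$-equivalent to $x^n - c$ for the relevant $c$; (ii) show $f(2)\equiv 0$ translates to $c$ being an $n$-th power in $\FF_q^\times$ with the full group $\mu_n$ present, via an order argument on $2 \in \FF_q^\times$; (iii) invoke separability ($q \nmid n$) and Dedekind–Kummer to conclude. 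I would also remark that the second clause $q \nmid [R:\ZZ[\beta]]$ is free here since we assumed $R = \ZZ[\beta]$ outright. I do not expect any step beyond (i)–(ii) to be technically hard; the subtlety is entirely in making the hypothesis on $f \bmod q$ explicit so that "$f(2) \equiv 0 \bmod q$" really is equivalent to "$2 \in \mu_n(\FF_q)$" and hence to complete splitting.
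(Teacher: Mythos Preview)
You have misidentified the polynomial in play. The proposition sits in Section~\ref{sec:ex2}, where the examples are the $2$-power cyclotomic fields and $f(x)=\Phi_{2^k}(x)=x^{2^{k-1}}+1$; it is \emph{not} about the family $f_{n,q}(x)=x^n+q-1$ from Section~\ref{sec:provable}. This matters, because for $f_{n,q}$ the proposition is actually false: take $n=4$, $q=3$, so $f_{4,3}(x)=x^4+2\equiv x^4-1\pmod 3$. Then $f(2)=18\equiv 0\pmod 3$, yet $x^4-1\equiv (x-1)(x+1)(x^2+1)\pmod 3$ with $x^2+1$ irreducible, so $3$ does \emph{not} split completely. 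The flaw in your chain is the sentence ``$2^n\equiv 1$ means the subgroup $\mu_n\subseteq\FF_q^\times$ has order $n$'': all that $2^n\equiv 1$ gives is $\operatorname{ord}_q(2)\mid n$, hence $\operatorname{ord}_q(2)\mid\gcd(n,q-1)$, which does not force $n\mid q-1$.

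For the correct $f(x)=x^{2^{k-1}}+1$, the hypothesis $f(2)\equiv 0\pmod q$ reads $2^{2^{k-1}}\equiv -1\pmod q$, and this is exactly what rescues the order argument: since $2^{2^{k-1}}\not\equiv 1$ but $2^{2^k}\equiv 1$, the order of $2$ is a power of $2$ dividing $2^k$ but not $2^{k-1}$, hence equals $2^k$ (this is the paper's Lemma~\ref{order}). Now $2^k\mid q-1$, so $x^{2^k}-1$ and its factor $x^{2^{k-1}}+1$ split into distinct linear factors over $\FF_q$, and Dedekind--Kummer finishes as you outlined. The paper's proof does the same thing slightly more concretely: it writes down the roots $2,2^3,2^5,\ldots,2^{2^k-1}$ explicitly and checks they are distinct using the order computation. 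So your overall strategy (Dedekind--Kummer plus an order-of-$2$ argument) is right; you just need to work with the right $f$, where the crucial point is that $f(2)\equiv 0$ gives $2^{2^{k-1}}\equiv -1$ rather than merely $2^n\equiv 1$.
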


\begin{proof}
Since $2^{2^{k-1}} \equiv -1 \pmod q,$ it follows that $(2^\alpha)^{2^{k-1}}\equiv (-1)^\alpha \equiv -1 \pmod q$ for all odd $\alpha$ in $\mathbb Z$. We'll show that $2,2^3, 2^5, \ldots, 2^m$ where $m=2^k-1$ are all distinct mod $q$, hence showing that $f(x)$ has $2^{k-1}$ distinct roots mod $q$ i.e. $f(x)$ splits mod $q$.
Assume that $2^i \equiv 2^j \pmod q$ for some $1\leq i< j \leq 2^k-1$. Then $2^{j-i} \equiv 1 \pmod q$, which means that the order of $2$ modulo $q$ divides $j-i$. However, by the fact below (Lemma \ref{order}),  the order of $2$ mod $q$ is $2^k$, which is a contradiction since $j-i< 2^k.$
\end{proof}

\begin{lemma} \label{order}
Let $q$ be a prime such that $2^{2^{k-1}} \equiv -1 \pmod q$ for some integer $k$. Then the order of $2$ modulo $q$ is $2^k$.
\end{lemma}

\begin{proof}
Let $a$ be the order of $2$ modulo $q$. By assumption $(2^{2^{k-1}})^2 \equiv 2^{2^k} \equiv 1 \pmod q.$ Then $a | 2^k$ i.e. $a=2^\alpha$ for some $\alpha \leq k.$ Say $\alpha \leq k-1.$ Then $1=(2^{2^\alpha})^{2^{k-1-\alpha}}= 2^{2^{k-1}} \equiv -1 \pmod q$, a contradiction.
\end{proof}

The converse of Proposition \ref{easysplit} does not hold. For instance, let $K$ be the splitting field of the polynomial $x^{8}+1$ and $q=401$. Then $q$ splits in $R$. However $f(2)=257 \not\equiv 0 \pmod q$.

We now present a family of examples for which $\alpha=-1$ is a root of $f$ of order two.  Conditions \ref{galois}, \ref{mono}, \ref{split}, \ref{smallprimeprime} (order 2) and \ref{big} are all satisfied.
The field $K$ is the cyclotomic number field of degree $\phi(2^k)=2^{k-1}$, but
instead of the cyclotomic polynomial we take the minimal polynomial of $\zeta_{2^k} + 1.$
In each case, $q$ is obtained by factoring $2^{2^{k-1}}+1$ for various values of $k$ and splitting is verified using Proposition \ref{easysplit}.

\begin{center}
        {\renewcommand{\arraystretch}{1.25}
         \setlength{\tabcolsep}{0.2em}
\begin{tabular}{|c||c|c|c|c|c|c|c|c|c|}
\hline k  & \hspace{0.12cm} 2 \hspace{0.12cm} & \hspace{0.14cm} 3\hspace{0.14cm} & 4 &  5  & 6& 7& 7& 8 & 8  \\
\hline q & 5 & 17 & 257 & 65537$\sim 2^{16}$ &  6700417 $\sim 2^{22}$  & 274177 $\sim 2^{18}$  &    $q_5\sim 2^{45}$ & $q_6\sim 2^{55}$ & $q_1\sim 2^{72}$ \\ \hline
\end{tabular}
\begin{tabular}{|c||c|c|c|c|c|c|c|c| }
\hline k   & 9 & 9 & 10 & 10 & 10& 11& 11& 11  \\
\hline q &  $q_7\sim 2^{50}$   &   $q_2 \sim 2^{205}$  & 2424833$\sim 2^{21}$  & $q_3\sim 2^{162}$  &  $q_4 \sim 2^{328}$ & $q_8\sim 2^{25} $ & $q_9\sim 2^{32}$ & $q_{10} \sim 2^{131}$\\ \hline
\end{tabular}}
\end{center}

\normalsize
Several of these examples are of cryptographic size\footnote{ {$q_1=  5704689200685129054721, q_2= 93461639715357977769163558199606896584051237541638188580280321,$} \\
$ {q_3=7455602825647884208337395736200454918783366342657, q_5= 67280421310721, q_6= 59649589127497217}$\\
${q_4=  741640062627530801524787141901937474059940781097519023905821316144415759504705008092818711693940737}$\\
${q_7=1238926361552897, q_8=45592577, q_9=6487031809, q_{10}=4659775785220018543264560743076778192897 }$}, i.e. the field has degree $2^{10}$ and the prime is of size $\sim 2^{32}$ or greater.  These provide examples which are weak against our Poly-LWE attack, by Proposition \ref{prop:SmallError}.

\section{Cyclotomic (in)vulnerability}
\label{sec:cyc}

One of our principal observations is that the cyclotomic fields, used for Ring-LWE, are uniquely protected against the attacks presented in this paper.  The next proposition states that the polynomial ring of the $m$-th cyclotomic polynomial $\Phi_m$ will never be vulnerable to the attack based on a root of small order.

\begin{proposition} The roots of $\Phi_m$ have order $m$ modulo every split prime $q$.
\end{proposition}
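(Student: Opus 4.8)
The plan is to reduce the statement to an elementary fact about orders of elements in $\FF_q^\times$, using the classical description of how $\Phi_m$ factors modulo a completely split prime. Throughout, "split prime" means a prime $q$ that splits completely in $R = \ZZ[\zeta_m]$; such a $q$ is unramified, hence satisfies $q \nmid m$ (ramified primes divide $m$), equivalently $q \equiv 1 \pmod m$.

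First I would pin down the existence of roots and the factorization. Since the cyclotomic ring of integers is $R = \ZZ[\zeta_m] = \ZZ[x]/(\Phi_m(x))$, the Dedekind--Kummer theorem applies with no index obstruction, so the factorization type of $(q)$ in $R$ matches the factorization of $\Phi_m$ modulo $q$. Hence $q$ splitting completely is equivalent to $\Phi_m$ splitting into $\f(m)$ distinct linear factors over $\FF_q$; in particular $\Phi_m$ has $\f(m)$ roots in $\FF_q$. (Alternatively, and without Dedekind--Kummer: $q \equiv 1 \pmod m$ makes $\FF_q^\times$ cyclic of order divisible by $m$, so it contains elements of order $m$.)

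Next I would compute the order of an arbitrary root $\alpha \in \FF_q$ of $\Phi_m$. Reducing the identity $x^m - 1 = \prod_{d \mid m} \Phi_d(x)$ modulo $q$ shows $\Phi_m(x) \mid x^m - 1$ in $\FF_q[x]$, so $\alpha^m = 1$ and $e := \ord(\alpha)$ divides $m$. Suppose $e < m$. Then $\alpha$ is a root of $x^e - 1 = \prod_{d \mid e}\Phi_d(x)$ modulo $q$, hence a root of $\Phi_d$ modulo $q$ for some $d \mid e$, and such a $d$ is a proper divisor of $m$. But because $q \nmid m$ the polynomial $x^m - 1$ is separable modulo $q$, so the factors $\{\Phi_d : d \mid m\}$ are pairwise coprime in $\FF_q[x]$; in particular $\Phi_m$ and $\Phi_d$ have no common root, contradicting that $\alpha$ is a root of both. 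Therefore $e = m$, which is the claim. (One can package the last two sentences as: over a field of characteristic not dividing $m$, the roots of $\Phi_m$ are exactly the primitive $m$-th roots of unity, i.e. the elements of multiplicative order $m$.)

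There is no real obstacle here; the only points demanding care are (i) reading "split" as "splits completely," which simultaneously guarantees that $\Phi_m$ factors fully modulo $q$ and that $q$ is unramified, hence $q \nmid m$ — the hypothesis that makes $x^m-1$ separable mod $q$; and (ii) using $R = \ZZ[\zeta_m]$ so that Dedekind--Kummer may be invoked with trivial conductor. Everything else is the standard relation $x^m - 1 = \prod_{d\mid m}\Phi_d(x)$ together with separability.
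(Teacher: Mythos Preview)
Your proof is correct and follows essentially the same approach as the paper: both argue that $\Phi_m \mid x^m-1$ forces the order of any root to divide $m$, and then use coprimality of $\Phi_m$ with $x^k-1$ (equivalently, with the $\Phi_d$ for $d\mid m$, $d<m$) modulo $q$ to rule out any smaller order. Your version is more careful in making explicit that the ``split'' hypothesis gives $q\nmid m$, which is exactly what guarantees the separability of $x^m-1$ modulo $q$ needed for the coprimality step; the paper leaves this point implicit.
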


\begin{proof}
Consider the field $\FF_q$, $q$ prime.  Since $\FF_q$ is perfect, the cyclotomic polynomial $\Phi_m(x)$ has $\phi(m)$ roots in an extension of $\FF_q$.  This polynomial has no common factor with $x^k-1$ for $k < m$.  However, it divides $x^m-1$.  Therefore its roots have order dividing $m$, but not less than $m$.  That is, its roots are all of order exactly $m$ in the field in which they live.  Now, if we further assume that $\Phi_m(x)$ splits modulo $q$, then its $\phi(m)$ roots are all elements of order $m$ modulo $q$, so in particular, $m \mid q-1$.  The roots of $\Phi_m(x)$ are all elements of $\ZZ/q\ZZ$ of order exactly $m$.
\end{proof}

The question remains whether there is another polynomial representation for the ring of cyclotomic integers for which $f$ does have a root of small order.  This may in fact be the case, but the error distribution is transformed under the isomorphism to this new basis, so this does not guarantee a weakness in Poly-LWE for $\Phi_m$.

However, it is not necessary to search for all such representations to rule out the possibility that this provides an attack. The ring $R_q \cong \FF_q^n$ has exactly $n = \phi(m)$ homomorphisms to $\ZZ/q\ZZ$.  If $R_q$ can be represented as $(\ZZ/q\ZZ)[X]/f(X)$ with $f(\alpha)=0$, then the map $R_q \rightarrow \ZZ/q\ZZ$ is given by $p \mapsto p(\alpha)$ is one of these $n$ maps.  It suffices to write down these $n$ maps (in terms of any representation!) and verify that the errors map to all of $\ZZ/q\ZZ$ instead of a small subset.  It is a special property of the cyclotomics that these $n$ homomorphisms coincide.  Thus we are reduced to the case above.

\section{Successfully coded attacks}
\label{sec:exampleattack}

The following table documents Ring-LWE and Poly-LWE parameters that were successfully attacked on a Thinkpad X220 laptop with Sage Mathematics Software \cite{S}, together with approximate timings.  For code, see Appendix \ref{sec:code}.  The first row indicates that cryptographic size is attackable in Poly-LWE.  The second row indicates that a generic example attackable by Poly-LWE is also susceptible to Ring-LWE (see Section \ref{sec:heuristic}).  We were unable to test the Ring-LWE attack for $n>256$ only because Sage's built-in Discrete Gaussian Sampler was not capable of initializing (thus we were unable to produce samples to test).  The last two rows illustrate the $\tau$ of Theorem \ref{thm:redux} that is required for security in practice (approximately $\tau < 0.013$ instead of $\tau < 1$ in theory).  Note that these two rings are non-maximal orders ($q-1$ is not squarefree).  In the Ring-LWE rows, parameters were chosen to illustrate the boundary of feasibility for a fixed $n$.  Since the feasibility of the attack depends on the ratio $\sqrt{q}/n$, there is no reason to think larger $n$ are invulnerable (provided $q$ also grows), but we were unable to produce samples to test against.  The Poly-LWE example illustrates that runtime for large $q$ is feasible (runtimes for Poly-LWE and Ring-LWE are the same; it is only the samples which differ).

\begin{center}
        {\renewcommand{\arraystretch}{1.5}
         \setlength{\tabcolsep}{0.5em}
  \begin{tabular}{ c | c | c | c | c | c | c | c}
          {\small case} & $f$ & $q$ & $w$ & $\tau$ & {\small $\substack{\text{samples}\\\text{per run}}$} & {\small $\substack{\text{successful}\\\text{runs}}$} & {\small $\substack{\text{time}\\\text{per run}}$} \\ \hline
    \hline
    {\small Poly-LWE} & $x^{1024}+2^{31}-2$  & $2^{31}-1$ & $3.192$ & N/A & $40$ & $1$ of $1$ & $13.5$ hrs \\
    \hline\hline
    {\small Ring-LWE} & $\substack{x^{128}+524288x \\+524285}$ &  $524287$ & $8.00$ & N/A & $20$ & $8$ of $10$ & $24$ sec \\
    \hline\hline
    {\small Ring-LWE} & $x^{192}+4092$ &  $4093$ & $8.87$ & $0.0136$ & $20$ & $1$ of $10$ & $25$ sec \\
    \hline
    {\small Ring-LWE} & $x^{256}+8190$ &  $8191$ & $8.35$ & $0.0152$ & $20$ & $2$ of $10$ & $44$ sec \\
    \hline
  \end{tabular}}
\end{center}

\appendix
\section{Appendix: Code}
\label{sec:code}
\subsection{Proof of concept for Ring-LWE and Poly-LWE attacks}

The following Sage Mathematical Software~\cite{S} code verifies that Algorithm \ref{alg:SmallError} succeeds on the Poly-LWE and Ring-LWE examples of Section \ref{sec:exampleattack}.  Note that Algorithm \ref{alg:SmallSet} is a minor modification of Algorithm \ref{alg:SmallError}.

This code relies on \texttt{DiscreteGaussianDistributionLatticeSampler}, a built-in package in Sage.  The sampler is incapable of initializing in sufficiently large dimension to fully test the attacks in this paper.   See the related trac ticket \url{http://trac.sagemath.org/ticket/17764}.

Built into the code are several error checks that will be triggered if sufficient precision is not used.

This code is available in electronic form at \url{http://math.colorado.edu/~kstange/scripts.html}.


\scriptsize
\begin{verbatim}
##################################################
# RING-LWE ATTACK #
##################################################

# General preparation of Sage:  Create a polynomial ring and import GaussianSampler, Timer
P.<y> = PolynomialRing(RationalField(), 'y')
from sage.stats.distributions.discrete_gaussian_lattice import DiscreteGaussianDistributionLatticeSampler
RP = RealField(300) # this sets the precision; if it is insufficient, the implementation won't be valid
from sage.doctest.util import Timer

# Give the Minkowski lattice for a given ring determined by a polynomial.
# Also gives a key to which are real embeddings.
def cmatrix(): # returns a matrix, columns basis 1, x, x^2, x^3, ... given in the canonical embedding
    global N, a
    N.<a> = NumberField(f)
    fdeg = f.degree()
    key = [0 for i in range(fdeg)] # 0 = real, 1 = real part of complex emb, 2 = imaginary part
    embs = N.embeddings(CC)
    M = matrix(RP,fdeg,fdeg)
    print "Preparing an embedding matrix:  computing powers of the root."
    apows = [ a^j for j in range(n) ]
    print "Finished computing the powers of the root."
    i = 0
    while i < n:
        em = embs[i]
        if Mod(i,20)==Mod(0,20) or Mod(i,20)==Mod(1,20):
            print "Embedding matrix: ", i, " rows out of ", n, " complete."
        if em(a).imag() == 0:
            key[i] = 0
            for j in range(n):
                M[i,j] = em(apows[j]).real()
            i = i + 1
        else:
            key[i] = 1
            key[i+1] = 2
            for j in range(n):
                M[i,j] = em(apows[j]).real()
                M[i+1,j] = (em(apows[j])*I).real()
            i = i + 2
    return M, key

# Produce a random vector from (Z/qZ)^n
def random_vec(q, dim):
    return vector([ZZ.random_element(0,q) for i in range(dim)])

# Useful function for real numbers modulo q
def modq(r,q):
    s = r/q
    t = r/q - floor(r/q)
    return t*q

# Call sampler
def call_sampler():
    e = sampler().change_ring(RP)
    return e

# Create samples using a lattice (given by latmat and its inverse),
# a Gaussian sampler on that lattice, secret, prime
def get_sample(latmat, latmatinv, sec, qval, keyval):
    e = call_sampler() # create error, in R^n
    dim = latmat.dimensions()[0] # detect dimension of lattice
    pre_a = random_vec(qval, dim) # create a uniformly randomly in terms of basis in cm
    a = latmat*pre_a # create a, in R^n
    b = vecmul_poly(a,sec,latmat,latmatinv) + e # create b, in R^n
    pre_b = latmatinv*b # move to basis in cm in order to reduce mod q
    pre_b_red = vector([modq(c,qval) for c in pre_b])
    b = latmat*pre_b_red
    return [a, b]

# Global choices: setup a field and prime, sampler.
# Set to dummy values that will be altered when an attack is run
q = 1
n = 1
sig = 1/sqrt(2*pi)
Zq = IntegerModRing(q)
R.<x> = PolynomialRing(Zq)
f = y + 1
N.<a> = NumberField(f)
S.<z> = R.quotient(f) # This is P_q
cm,key = cmatrix()
cmi = cm.inverse()
cm
cm53 = cm.change_ring(RealField(10))
cmqq = cm53.change_ring(QQ)
sampler = DiscreteGaussianDistributionLatticeSampler(cmqq.transpose(), sig)

# Set the parameters for the attack
def setup_params(fval,qval,sval):
    global q,n,sig,f,S,x,z,Zq
    f = fval
    n = f.degree()
    q = qval
    Zq = IntegerModRing(q)
    R.<x> = PolynomialRing(Zq)
    sig = sval/sqrt(2*pi)
    S.<z> = R.quotient(f)
    print "Setting up parameters, polynomial = ", f, " and prime = ", q, " and sigma = ", sig
    print "Verifying properties:  "
    print "Prime?", q.is_prime()
    print "Irreducible? ", f.is_irreducible()
    print "Value at 1 modulo q?", Mod(f.subs(y=1),q)
    return True

# Compute the lattices in Minkowski space
def prepare_matrices():
    global cm, key, cmi, cmqq
    print "Preparing matrices."
    cm,key = cmatrix()
    print "Embedding matrix prepared."
    cmi = cm.inverse()
    print "Inverse matrix found."
    cm53 = cm.change_ring(RealField(10))
    cmqq = cm53.change_ring(QQ)
    print "All matrices prepared."
    return True

# Make a vector in R^n into a polynomial, given change of basis matrix and variable to use
def make_poly(a,matchange,var):
    coeffs = matchange*a  #coefficients of the polynomial are given by the change of basis matrix
    pol = 0
    for i in range(n):
        pol = pol + ZZ(round(coeffs[i]))*var^i # var controls where it will live (what poly ring)
    return pol

# Make a polynomial into a vector in Minkowski space
def make_vec(fval,matchange):
    if fval == 0:
        coeffs = [0 for i in range(n)]
    else:
        coeffs = [0 for i in range(n)]
        colist = lift(fval).coefficients()
        for i in range(len(colist)):
            coeffs[i] = ZZ(colist[i])
    return matchange*vector(coeffs)

# Multiplication in the Minkowski space via moving to polynomial ring
def vecmul_poly(u,v,mat,matinv):
    poly_u = make_poly(u,matinv,z)
    poly_v = make_poly(v,matinv,z)
    poly_prod = poly_u*poly_v
    return make_vec(poly_prod,mat)

# Create the sampler on the lattice embedded in R^n
def initiate_sampler():
    global sampler
    print "Initiating Sampler."
    sampler = DiscreteGaussianDistributionLatticeSampler(cmqq.transpose(), sig)
    print "Sampler initiated with sigma", RDF(sig)
    return True

# Produce error vectors, just a test to see how they look
def error_test(num):
    print "Testing the error vector production by producing ", num, " errors."
    errorlist = [sampler().norm().n() for _ in range(num)]
    meannorm = mean(errorlist) # average norm
    maxnorm = max(errorlist) # maximum norm
    print "The average error norm is ", RDF(meannorm/( sqrt(n)*sampler.sigma*sqrt(2*pi) )), " times sqrt(n)*s."
    maxratio = RDF(maxnorm/( sqrt(n)*sampler.sigma*sqrt(2*pi) ))
    print "The maximum error norm is ", maxratio, " times sqrt(n)*s."
    if maxratio > 1:
        print "~~~~~~~~~~~~~~~~~~~~~~~ ERROR ~~~~~~~~~~~~~~~~~~~~~~~~~"
        print "The errors do not satisfy a proven upper bound in norm."
    return True

# Create the secret
secret = 0
def create_secret():
    global secret
    secret = cm*random_vec(q,n)
    return True

# Produce samples
samps = []
numsamps = 1
def create_samples(numsampsval):
    global samps, numsamps
    samps = []
    print "Creating samples"
    for i in range(numsampsval):
        print "Creating sample number ", i
        samp = get_sample(cm, cmi, secret, q, key)
        samps.append(samp)
    numsamps = len(samps)
    print "Done creating ", numsamps, "samples."
    return True

# Function for going down to q
def go_to_q(a,matchange):
    pol = make_poly(a,matchange,x)
    #print "debug got pol:", pol
    pol_eval = pol.subs(x=1)
    #print "debug eval'd to:", pol_eval, " and then ", Zq((pol_eval))
    return Zq(pol_eval)

# Check to make sure moving to q preserves product -- the last two lines should be equal
def sanity_check():
    print "Initiating sanity check"
    mat = cmi
    pvec1 = random_vec(q,n)
    vec1 = cm*pvec1
    pvec2 = random_vec(q,n)
    vec2 = cm*pvec2
    vprod2 = vecmul_poly(vec1,vec2,cm,cmi)
    first_thing = go_to_q(vprod2,mat)
    second_thing = go_to_q(vec1,mat)*go_to_q(vec2,mat)
    if first_thing == second_thing:
        print "Sanity confirmed."
    else:
        print "~~~~~~~~~~~~~~~~~~~~~~~ ERROR ~~~~~~~~~~~~~~~~~~~~~~~~~"
        print "Sanity problem:", first_thing, " is not equal to ", second_thing, "."
        print "Are you sure your ring has root 1 mod q?"
    return True

# Given a list of elements of Z/qZ, make a histogram and zero count
def histoq(data):
    hist = [0 for i in range(10)] # empty histogram
    zeroct=0 # count of zeroes mod q
    for datum in data:
        e = datum
        if e == 0:
            zeroct = zeroct+1
        histbit = floor(ZZ(e)*10/q)
        hist[histbit]=hist[histbit]+1
    return [hist, zeroct]

# Given a list of vectors in R^n, create a histogram of their
# values in Z/qZ under make_poly, together with a zero count
def histo(data,cmi):
    return histoq([go_to_q(datum,cmi) for datum in data])

# Create a histogram of error vectors, transported to polynomial ring
def histogram_of_errors():
    print "Creating a histogram of errors mod q."
    errs = []
    for i in range(80):
        errs.append(sampler())
    hist = histo(errs,cmi)
    print "The number of error vectors that are zero:", hist[1]
    bar_chart(hist[0], width=1).show(figsize=2)
    return True

# Create a histogram of the a's in the samples, transported to polynomial ring
def histogram_of_as():
    print "Creating a histogram of sample a's mod q."
    a_vals = [samp[0] for samp in samps]
    hist = histo(a_vals,cmi)
    print "The number of a's that are zero:", hist[1]
    bar_chart(hist[0], width=1).show(figsize=2)
    return True

# Create a histogram of errors by correct guess
def histogram_of_errors_2():
    print "Creating a histogram of supposed errors if sample is guessed, mod q."
    hist = histoq([ lift(Zq(go_to_q(sample[1],cmi) - go_to_q(sample[0],cmi)*go_to_q(secret,cmi))) for sample in samps])
    print "The number of such that are zero:", hist[1]
    bar_chart(hist[0], width=1).show(figsize=2)
    return True

# Create the secret mod q
lift_s = 0
def secret_mod_q():
    global lift_s
    lift_s = go_to_q(secret,cmi)
    print "Storing the secret mod q.  The secret is ", secret, " which becomes ", lift_s
    return True

# Algorithm 2
# reportrate controls how often it updates the status of the loop; larger = less frequently
# quickflag = True will run only the secret and a few other values to give a quick idea if it works
def alg2(reportrate, quickflag = False):
    print "Beginning algorithm 2."
    numsamps = len(samps)
    a = [ 0 for i in range(numsamps)]
    b = [ 0 for i in range(numsamps)]
    print "Moving samples to F_q."
    for i in range(numsamps):
        sample = samps[i]
        a[i] = go_to_q(sample[0],cmi)
        b[i] = go_to_q(sample[1],cmi)
    possibles = []
    winner = [[],0]
    print "Samples have been moved to F_q."
    for i in range(2):
        if i == 0:
            print "!!!!! ROUND 1: !!!!! First, checking how many samples the secret survives (peeking ahead)."
            iterat = [lift_s]
        if i == 1:
            print "!!!!! ROUND 2: !!!!! Now, running the attack naively."
            possibles = []
            if quickflag:
                print "We are doing it quickly (not a full test)."
                iterat = xrange(1000)
            else:
                iterat = xrange(q)
        for g in iterat:
            if Mod(g,reportrate) == Mod(0,reportrate):
                print "Currently checking residue ", g
            g = Zq(g)
            potential = True
            ctr = 0
            while ctr < numsamps and potential:
                e = abs(lift(Zq(b[ctr]-g*a[ctr])))
                if e > q/4 and e < 3*q/4:
                    potential = False
                    if ctr == winner[1]:
                        winner[0].append(g)
                        print "We have a new tie for longest chain:", g, " has survived ", ctr, " rounds."
                    if ctr > winner[1]:
                        winner = [[g],ctr]
                        print "We have a new longest chain of samples survived:", g, " has survived ", ctr, " rounds."
                ctr = ctr + 1
            if potential == True:
                print "We found a potential secret: ", g
                possibles.append(g)
            if g == lift_s:
                if i == 0:
                    print "The real secret survived ", ctr, "samples."
                #break
    print "Full list of survivors of the ", numsamps, " samples:", possibles
    print "The real secret mod q was: ", lift_s
    if len(possibles) == 1 and possibles[0] == lift_s:
        print "Success!"
        return True
    else:
        print "Failure!"
        return False

# Run a simulation.
def shebang(fval,qval,sval,numsampsval,numtrials,quickflag=False):
    global sig
    print "Welcome to the Ring-LWE Attack."
    n = fval.degree()
    print "The attack should theoretically work if the following quantity is greater than 1."
    print "Quantity: ", RDF( qval/( 2*sqrt(2)*sval*n*(qval-1)^( (n-1)/2/n) ) )
    timer = Timer()
    timer2 = Timer()
    timer.start()
    print "********** PHASE 1: SETTING UP SYSTEM "
    setup_params(fval,qval,sval)
    prepare_matrices()
    print "Computing the adjustment factor for s."
    cembs = (n - len(N.embeddings(RR)))/2
    detscale = RP( ( 2^(-cembs)*sqrt(abs(f.discriminant())) )^(1/n) ) # adjust the sigma,s
    sval = sval*detscale
    sig = sig*detscale
    print "Adjusted s for use with this embedding, result is ", sval
    initiate_sampler()
    print "The sampler has been created with sigma = ", sampler.sigma
    print "Sampled vectors will have expected norm ", RDF(sqrt(n)*sampler.sigma)
    error_test(5)
    print "Time for Phase 1: ", timer.stop()
    timer.start()
    count_successes = 0
    timer2.start()
    for trialnum in range(numtrials):
        print "*~*~*~*~*~*~*~*~*~*~*~*~* TRIAL NUMBER ", trialnum, "*~*~*~*~*~*~*~*~*~*~*~*~*~*~*~*~"
        print "********** PHASE 2: CREATE SECRET AND SAMPLES"
        create_secret()
        create_samples(numsampsval)
        sanity_check()
        print "Time for Phase 2: ", timer.stop()
        timer.start()
        print "********** PHASE 3: HISTOGRAMS"
        histogram_of_errors()
        print "The histogram of errors (above) should be clustered at edges for success."
        histogram_of_as()
        print "The histogram of a's (above) should be fairly uniform."
        histogram_of_errors_2()
        print "The histogram of sample errors (above) should be clustered at edges for success."
        print "Time for Phase 3: ", timer.stop()
        timer.start()
        print "********** PHASE 4: ATTACK ALGORITHM"
        secret_mod_q()
        result = alg2(10000,quickflag)
        print "Result of Algorithm 2:", result
        print "Time for Phase 4: ", timer.stop()
        if result == True:
            count_successes = count_successes + 1
        print "*~*~*~*~*~*~*~*~*~*~*~*~* ", count_successes, " out of ", trialnum+1, " successes so far. *~*~*~*~*~*"
    totaltime = timer2.stop()
    print "Total time for ", trialnum+1, "trials was ", totaltime
    return count_successes

\end{verbatim}
\normalsize

\subsection{Sage code for Algorithm \ref{alg:q}} \label{A2}

The following Sage Mathematics Software~\cite{S} algorithm returns the largest prime $q$ for which a polynomial $f$ has a root of order $m$ modulo $q$.

\scriptsize
\begin{verbatim}
x = PolynomialRing(RationalField(), 'x').gen()
def findq(f,m):
    g = x^m-1
    xg = f.xgcd(g)
    cofs = xg[2].coefficients()
    dens = [ a.denominator() for a in cofs ]
    facs = lcm(dens).factor()
    return max([fac[0] for fac in facs ])
\end{verbatim}
\normalsize

\end{document}